\documentclass[twoside,11pt]{article}
\usepackage{mathtools}
\usepackage{amsmath}
\usepackage{amssymb}
\usepackage{graphicx}
\usepackage{comment}
\usepackage{xcolor}
\usepackage{bm}
\usepackage{amsthm}
\usepackage{siunitx}
\usepackage{multirow}
\usepackage{algorithm}
\usepackage{algorithmicx}
\usepackage{algpseudocode}
\algrenewcommand\algorithmicrequire{\textbf{Input:}}
\algrenewcommand\algorithmicensure{\textbf{Output:}}
\newtheorem{assumption}{Assumption}
 
\newtheorem{theorem}{Theorem}
\newtheorem{lemma}[theorem]{Lemma} 
\newtheorem{proposition}[theorem]{Proposition} 
\theoremstyle{remark}
\newtheorem*{remark}{Remark}
\newtheorem{corollary}[theorem]{Corollary}

\usepackage{caption}
\usepackage{subcaption}
\def\ci{\perp\!\!\!\perp}
\usepackage{enumerate}
\usepackage{natbib}
\usepackage{url}

\usepackage{jmlr2e}

\usepackage{lastpage}

\ShortHeadings{Identifiable Deep Latent Variable Models for MNAR Data}{Xie, Xue, Wang}
\firstpageno{1}

\begin{document}

\title{Identifiable Deep Latent Variable Models for MNAR Data}

\author{\name Huiming Xie \email xie339@purdue.edu \\
       \addr Department of Statistics\\
       Purdue University\\
       West Lafayette, IN 47907, USA
       \AND
       \name Fei Xue \email feixue@purdue.edu \\
       \addr Department of Statistics\\
       Purdue University\\
       West Lafayette, IN 47907, USA
       \AND
       \name Xiao Wang \email wangxiao@purdue.edu \\
       \addr Department of Statistics\\
       Purdue University\\
       West Lafayette, IN 47907, USA}

\editor{}

\maketitle

{
\begin{abstract}
Missing data is a ubiquitous challenge in data analysis, often leading to biased and inaccurate results. Traditional imputation methods usually assume that the missingness mechanism is missing-at-random (MAR), where the missingness is independent of the missing values themselves. This assumption is frequently violated in real-world scenarios, prompted by recent advances in imputation methods using deep learning to address this challenge. However, these methods neglect the crucial issue of nonparametric identifiability in missing-not-at-random (MNAR) data, which can lead to biased and unreliable results. This paper seeks to bridge this gap by proposing a novel framework based on deep latent variable models for {MNAR data}. Building on the assumption of conditional no self-censoring {given} latent variables, we establish the identifiability of the data distribution. This crucial theoretical result guarantees the feasibility of our approach. To effectively estimate unknown parameters, we develop an efficient algorithm utilizing importance-weighted autoencoders. We demonstrate, both theoretically and empirically, that our estimation process accurately recovers the ground-truth joint distribution under specific regularity conditions. Extensive simulation studies and real-world data experiments showcase the advantages of our proposed method compared to various classical and state-of-the-art approaches to missing data imputation.
\end{abstract}

\begin{keywords}
  Autoencoders, generative models, imputation, missing-not-at-random, variational inference
\end{keywords}
\section{Introduction}\label{intro}
The issue of missing data poses a pervasive challenge across diverse applications, often necessitating data imputation due to the prevalence of models designed exclusively for complete data. 
Methods for addressing missing data have been studied extensively in ignorable missing situations where missingness depends only on the fully observed variables of the data, known as missing-at-random (MAR), or is independent of any data variable, known as missing-completely-at-random 
 (MCAR) \citep{robins1994estimation}. 
%
On the basis of such ignorability assumptions, many interesting imputation methods have been developed to make use of most models designed for complete data.
For example, missForest \citep{stekhoven2012missforest} based on random forests is one of the popular methods for single imputation, while MICE \citep{van2011mice}, short for multivariate imputation by chained equations, is a typical method for multiple imputation \citep{little2019statistical}.
%
Supported by increased computing power, recent advances have witnessed the development of sophisticated imputation methods, particularly deep generative models, based on these ignorability assumptions \citep{yoon2018gain,ma2018eddi,li2019misgan,mattei2019miwae}.

While convenient for modeling, these ignorability assumptions often do not align with the reality of real-world applications, spanning domains such as surveys, recommendation systems, and the medical field. For instance, in surveys, 
a question associated with social stigma 
   could  result in nonresponse \citep{marra2017simultaneous, malinsky2021semiparametric}. 
Certain medical records could be incomplete 
either because the information is unnecessary or because acquiring it is challenging due to the patients' health conditions \citep{shpitser2016consistent, jakobsen2017and}.  
In situations where absence of data is potentially influenced by characteristics of missing variables, known as missing-not-at-random (MNAR), modeling becomes significantly more complex \citep{robins1997non}.

There has been a relatively limited body of work addressing MNAR settings within the context of scalable missing value imputation. The scenario is explored more often in the application of recommender systems using probabilistic models \citep{hernandez2014probabilistic, ling2012response, liang2016modeling}. Other perspectives include causal approaches with an explicit model of exposure for MNAR \citep{liang2016causal,wang2018deconfounded, wang2019blessings}, and inverse probability weighting methods \citep{ma2019missing, schnabel2016recommendations,wang2019doubly}. 

More recently, deep generative models, especially deep latent variable models (DLVMs) have been studied under MNAR assumptions. For example, not-MIWAE \citep{ipsen2020not}, short for the not-missing-at-random importance-weighted autoencoder, is a powerful model that incorporates prior knowledge of the missingness mechanism into density estimation. Nonetheless, 
these models often overlook the critical issue of nonparametric identifiability in MNAR data, as discussed in the seminal work by
\citet{robins1997non}. Some initial efforts in DLVMs to address identification issues, such as the deep generative imputation model (GINA) \citep{ma2021identifiable}, require the presence of auxiliary variables, which may not always be available. Consequently, these models tend to learn multiple non-unique complete data distributions that match the observed data \citep{robins1997non}. Overlooking this fundamental aspect of MNAR data in modeling can introduce bias and inconsistency, potentially impairing inference and downstream tasks.


On the other hand, numerous theoretical studies have delved into the identification of MNAR data using nonparametric or semiparametric inference techniques.
They often impose constraints on the missingness mechanism 
to uniquely identify a functional or parameter of interest. Examples include the group permutation missing process \citep{robins1997non}, sequential identification \citep{sadinle2018sequential}, discrete choice models \citep{tchetgen2018discrete}, nearest identified pattern restriction \citep{linero2017bayesian}, conditions in graphical models \citep{fay1986causal, ma2003identification, mohan2021graphical, nabi2020full}, the no self-censoring model \citep{malinsky2021semiparametric, shpitser2016consistent, sadinle2017itemwise}, and the self-censoring model with completeness assumptions \citep{li2022self}. 
However, these methods generally
impose a specific parametric model on the ground-truth distributions, which might be misspecified.
}

In this paper, we propose an {\it {i}dentifiable {m}issing-not-at-random {i}mportance-{w}eighted {a}uto{e}ncoder} (IM-IWAE) method to estimate the true distribution of all the variables and impute missing values for MNAR data.
The proposed method is built upon a new
{conditional} no self-censoring assumption \textcolor{black}{given} latent variables for missingness mechanisms, which assumes that the missing indicator of a random variable is conditionally independent of this variable and other missing indicators given all other variables and latent factors.
%

Our contributions are outlined as follows: 
{\color{black}
\begin{itemize}
\item We establish the nonparametric identifiability of the true data distribution under our proposed conditional no self-censoring assumption for MNAR data. Beyond providing identifiability, this assumption clarifies how no self-censoring can be formulated within latent variable models, offering a foundation for subsequent methodological developments.



    \item We propose IM-IWAE, a working parametric model based on deep neural networks, which can asymptotically approximate the full-data distribution 
    without requiring parametric assumptions on the true data distribution.
    The model can be used to impute missing values and generate synthetic data from the underlying distribution, where the parameters are estimated through an algorithm based on importance weighted variational inference. 
    To our knowledge, this is the first approach that jointly addresses nonparametric identifiability of MNAR data and practical modeling through deep latent variable methods. 

    \item Through extensive numerical studies, we demonstrate the ability of our method to provide identifiable 
    estimates and valid imputations and generations under correct model assumptions, as well as its robustness to practical violations of the no self-censoring assumption in MNAR data. This empirical evaluation, conducted across simulations and real-world datasets,  shows advantages over classical imputation methods and  existing deep learning imputation methods in handling MNAR data. Reproducible Python code 
    is available at \url{https://github.com/hxstat/IM-IWAE}.
\end{itemize}
}

{
The remainder of the paper is organized as follows. Section \ref{gt_full} introduces the ground-truth model for the full-data distribution. Section \ref{gt_model} provides identification assumptions on the missingness mechanism in the ground-truth model. In Section \ref{sec: method}, we propose a deep latent variable approach
for learning the ground-truth data distribution in the presence of MNAR data. 
In Section \ref{theory}, we establish the theoretical properties of the proposed method. Sections \ref{simulationstudy} and \ref{realexp}
provide numerical studies through simulations and real data analysis, respectively. The paper ends with the conclusion of Section \ref{conclusion}.

\section{Ground-truth model of full-data distribution}\label{gt_full}
In this section, we present the ground-truth model for the distribution of all variables and missing indicators.
We begin by introducing the notation.
We use capital letters to denote random variables or vectors, and lowercases for their values. Consider a random vector $X = (X_1,\cdots,X_p)^T \in \mathcal{X}$,  where $\mathcal{X}$ is a $d$-dimensional  manifold embedded in $\mathbb{R}^p$ for $d \leq p$. Let $R = (R_1,\cdots,R_p)^T \in \{0,1\}^p$ be a vector of missingness indicators of $X$, 
where $1$ corresponds to ``observed", and $0$ corresponds to ``missing''. Let $(r^{(1)}, x^{(1)}), \cdots, (r^{(n)}, x^{(n)})$ denote $n$ independently and identically distributed (i.i.d.) samples of the pair $(R,X)$.
With a slight abuse of notation, we write $R=r$ as a shorthand for $(R_1,..,R_p) = (r_1, \cdots, r_p)$. Let $X_{(r)}$ be a subvector of observed elements in $X$  when
$R = r$. The observed data consist of i.i.d. realizations of the vector $(R, X_{(R)})$. 
For convenience, we define another indicator $M=1-R$ as the opposite of $R$, that is,  $M_j$ = 1 represents $X_j$  missing and 0 represents $X_j$ observed. 

For simplicity and clarity, we call the joint distribution of $(X, R)$ the {\it full-data distribution}. Let $p_{gt}({x},{r})=p_{gt}(x) p_{gt}(r|x)$ denote the {\it ground-truth full-data distribution}, where  $p_{gt}(x)$ is the {\it ground-truth complete data distribution of $X$}, and $p_{gt}(r|x)$ is the {\it ground-truth conditional distribution of $R$ given $X$}, known as  \emph{the missingness mechanism}.


When data are high-dimensional, latent variable models (LVMs) are powerful tools for characterizing and discovering low-dimensional structures in data \citep{kingma2013auto, chen2016infogan}. From this inspiration, we adopt an LVM for the data distribution $p_{gt}(x)$ by assuming that $X$ are generated from a latent variable $Z \in \mathbb{R}^{q_1}$ through the conditional distribution $p_{X|Z}(x|z)$ so that $p_{gt}(x) = \int p_{X|Z}(x|z) p_Z(z)dz$. Analogously, for the missingness mechanism $p_{gt}(r|x)$, we adopt another latent variable $\tilde{Z} \in \mathbb R^{q_2}$ to capture the information in $R$ that could not be explained by $X$ alone and assume that $R$ is generated through $p_{R|X, \tilde Z}(r|x, \tilde{z})$. For this reason, it is natural to have the following assumptions:
\begin{assumption}\label{p_gtx}
The ground-truth data distribution can be decomposed as
\begin{equation}\label{equ:gt_x}
     p_{gt}({x}) = \int  p_{X|Z}({x}|z)   p_Z(z) dz .
\end{equation}
\end{assumption}

\begin{assumption}\label{independence}
Assume $\tilde{Z}$ is independent of $X$ and $Z$, and the ground-truth missingness mechanism can be decomposed as
\begin{equation}\label{equ:gt_rx}
     p_{gt}(r|x) = \int    p_{R|X, \tilde Z}({r}|{x}, \tilde{{z}}) p_{\tilde Z}(\tilde{{z}}) d\tilde{{z}}.
\end{equation}
\end{assumption}
The above two assumptions lead to the decomposition of the full-data distribution:
\begin{equation}\label{equ:gt}
    p_{gt}({x},{r}) = \iint p_{X|Z}({x}|z)   p_{R|X,\tilde{Z}}({r}|{x}, \tilde{{z}}) p_Z(z) p_{\tilde{Z}}(\tilde{{z}}) dz d\tilde{{z}}.
\end{equation}
At this point, we present a theoretical justification for adopting this decomposition approach for modeling the full-data distribution in general missing data problems.
\begin{theorem}\label{universal}
Suppose that $\mathcal{X}$ is a Riemannian manifold diffeomorphic to $\mathbb{R}^d$, and that $\mu_{gt}$ is a probability measure on $\mathcal{X}$ which has a nowhere vanishing density function $p_{gt}(x)$ with respect to the volume measure.
\begin{enumerate}[(a)]
    \item If $d<p$, then for any $\epsilon > 0$, $r\in \{0,1\}^p$, and a measurable set $A \subset \mathbb{R}^p$ satisfying $\mu_{gt}(\partial A \cap \mathcal{X}) = 0$, where $\partial A$ is the boundary of $A$, there exist latent variables $Z \sim \mathcal{N}(0, I_{q_1})$ with $q_1 \geq d, \tilde{Z} \sim \mathcal{N}(0, I_{q_2})$ with $q_2 \geq 1$, a multivariate normal density $p_{X|Z}(x|z)$, and a multivariate Bernoulli distribution $p_{R|X, \tilde Z}(r|x, \tilde z)$ such that  
    \begin{equation}
         \left|\int_{x\in A}\iint p_{X|Z}({x}|z)   p_{R|X, \tilde Z}({r}|{x}, \tilde{{z}}) p_Z(z) p_{\tilde Z}(\tilde{{z}}) dz d\tilde{{z}}dx- \int_{\mathcal{X} \cap A} p_{gt}(r|x) \mu_{gt}(dx)\right| < \epsilon.
     \end{equation} 
    \item If $d=p$, then for any $\epsilon > 0, r \in \{0,1\}^p,$ and $x \in \mathcal{X}$, there are latent variables and distributions as above satisfying 
    \begin{equation}
         \left|\iint p_{X|Z}({x}|z)   p_{R|X, \tilde Z}({r}|{x}, \tilde{{z}}) p_Z(z) p_{\tilde Z}(\tilde{{z}}) dz d\tilde{{z}}-p_{gt}(r|x) p_{gt}(x) \right| < \epsilon.
     \end{equation} 
\end{enumerate}
\end{theorem}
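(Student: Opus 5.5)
Since $\mathcal{X}$ is diffeomorphic to $\mathbb{R}^d$ and $\mu_{gt}$ has a nowhere vanishing density, I will realize $\mu_{gt}$ as a pushforward of a Gaussian and then smooth it with Gaussian noise; the $R$-part is handled separately by a Bernoulli model built from $\tilde Z$.

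\textbf{Step 1 (transport).} Let $\phi:\mathbb{R}^d\to\mathcal{X}$ be a diffeomorphism. The pulled-back measure $\phi^{-1}_\#\mu_{gt}$ has a positive density on $\mathbb{R}^d$, since the volume form pulls back to a positive density times Lebesgue measure. By a Knothe--Rosenblatt (or Brenier) construction there is a measurable bijection $T$ with $T_\#\mathcal{N}(0,I_d)=\phi^{-1}_\#\mu_{gt}$. Put $g=\phi\circ T$, so that $g_\#\mathcal{N}(0,I_d)=\mu_{gt}$. For $q_1>d$, let $g$ depend only on the first $d$ coordinates of $Z$.

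\textbf{Step 2 (missingness).} For the fixed $r$, set $\pi(x)=p_{gt}(r|x)\in[0,1]$. With $q_2=1$, define the Bernoulli model
\[
p_{R|X,\tilde Z}(r'|x,\tilde z)=\mathbf{1}\{r'=r\}\,\mathbf{1}\{\Phi(\tilde z)\le \pi(x)\}+\mathbf{1}\{r'=r_0\}\,\mathbf{1}\{\Phi(\tilde z)>\pi(x)\}
\]
for some other pattern $r_0$. Then $\int p_{R|X,\tilde Z}(r|x,\tilde z)p_{\tilde Z}(\tilde z)d\tilde z=\pi(x)$ exactly. Only the single fixed $r$ matters in the statement. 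If strictly positive Bernoulli probabilities are required, I will replace the indicators by steep sigmoids, which costs an arbitrarily small error.

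\textbf{Step 3 (smoothing).} Take $p_{X|Z}(x|z)=\mathcal{N}(x;g(z),\sigma^2 I_p)$. The model's left-hand integral over $A$ becomes $\mathbb{E}\big[\pi(g(Z)+\sigma\varepsilon)\,\mathbf{1}_A(g(Z)+\sigma\varepsilon)\big]$, while the target equals $\mathbb{E}[\pi(g(Z))\mathbf{1}_A(g(Z))]$.

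This is the main obstacle: $\pi$ is only defined on $\mathcal{X}$, and $\mathbf{1}_A$ is discontinuous. I will resolve it by letting the Bernoulli probability depend on $x$ through a projection to $\mathcal{X}$, or simply by extending $\pi$ to a continuous function on $\mathbb{R}^p$. To do that, first approximate $\pi$ in $L^1(\mu_{gt})$ by a continuous bounded $\tilde\pi$ (Lusin's theorem), at cost $\epsilon/3$, and then apply the Tietze extension. The product $\tilde\pi\,\mathbf{1}_A$ is then bounded and continuous $\mu_{gt}$-a.e., because $\mu_{gt}(\partial A\cap\mathcal{X})=0$. Since $g(Z)+\sigma\varepsilon\to g(Z)$ almost surely as $\sigma\to0$, dominated convergence (equivalently, the portmanteau theorem) gives convergence of the expectations. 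Choosing $\sigma$ small then yields an error below $\epsilon$. This proves part (a); there the boundary condition is exactly what is needed, since the model measure is absolutely continuous on $\mathbb{R}^p$ while $\mu_{gt}$ is singular when $d<p$.

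\textbf{Part (b).} When $d=p$, $\mathcal{X}$ is open in $\mathbb{R}^p$ and $p_{gt}$ is a Lebesgue density. The model density at $x$ is $\int \varphi_\sigma(x-y)\,\tilde\pi(y)\,p_{gt}(y)\,dy$, which converges to $\tilde\pi(x)p_{gt}(x)$ at every continuity point. So I will need $p_{gt}$ continuous (assumed implicitly, or taken at Lebesgue points) and will take the extension $\tilde\pi$ with $\tilde\pi(x)=\pi(x)$ at the given $x$. Because $x$ is fixed, I can instead let $\tilde\pi$ be any continuous function that agrees with $\pi$ at that point, which makes this pointwise claim easy.
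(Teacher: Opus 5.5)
Your proof is correct and uses the same skeleton as the paper: pull back by the diffeomorphism, compose the Knothe--Rosenblatt map with the Gaussian CDF, use a Gaussian decoder with variance $\to 0$, and handle $R$ by a quantile construction in $\Phi(\tilde z)$. It differs from the paper in two places.

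First, your Bernoulli model only reproduces $p_{gt}(r\mid x)$ for the one fixed pattern $r$, with a point mass on $r$ or on a dummy $r_0$. The paper instead partitions $[0,1]$ into $2^p$ consecutive intervals of lengths $p_{gt}(r'\mid x)$ and reads off each $R_j$ from the interval containing $\Phi(\tilde z)$. This gives a single product-Bernoulli model that reproduces the whole conditional law for every $r$ at once, which is what is reused in Lemma~\ref{lemma_r}. Your version suffices only because the statement lets the model depend on $r$.

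Second, and in your favour, the paper passes to the limit by Fubini and bounded convergence and asserts $\lim_{\beta\to\infty}\int_A\mathcal{N}(y\mid y',\beta^{-1}I)\,p_{gt}(r\mid y)\,dy=p_{gt}(r\mid y')$ for $y'$ interior to $A$. When $d<p$ the manifold is Lebesgue-null, so this silently requires $p_{gt}(r\mid\cdot)$ to be defined and continuous on a full $\mathbb{R}^p$-neighbourhood of $y'$. Your Lusin--Tietze step is exactly what repairs this. Take the Lusin set compact, hence closed in $\mathbb{R}^p$, since $\mathcal{X}$ itself need not be closed, and keep the extension $[0,1]$-valued. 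Your almost-sure convergence plus dominated convergence argument, using $\mu_{gt}(\partial A\cap\mathcal{X})=0$, is then clean.

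In (b) there is one slip. Because $R$ is decoded from $x$ and not from $g(z)$, the model density is $\tilde\pi(x)\int\varphi_\sigma(x-y)\,p_{gt}(y)\,dy$, not $\int\varphi_\sigma(x-y)\tilde\pi(y)p_{gt}(y)\,dy$. So you need only $\tilde\pi(x)=\pi(x)$, not continuity of $\tilde\pi$.

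Your caveat that $p_{gt}$ must be continuous at $x$ is the same unstated assumption behind the paper's ``Dirac delta'' step. For the literal pointwise statement you can avoid it entirely, since $x$ is fixed before the model is chosen. Take $p_{X\mid Z}(\cdot\mid z)=\mathcal{N}(x,\sigma^2 I)$ for all $z$ with $(2\pi\sigma^2)^{-p/2}=p_{gt}(x)$, which is possible because $p_{gt}(x)>0$. This makes the match exact.
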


Technical terms notwithstanding, Theorem \ref{universal} shows that in the missing data problem, the true full-data distribution can always be approximated arbitrarily well by an analytically tractable distribution in the form of $\int \int p_{X|Z}({x}|z)   p_{R|X, \tilde Z}({r}|{x}, \tilde{{z}}) p_Z(z) p_{\tilde Z}(\tilde{{z}}) dz d\tilde{{z}}$ . This implies that modeling the full-data distribution of $(X,R)$ using the decomposition in (\ref{equ:gt}) either accurately represents the true data generation process or approximates the distribution with high fidelity, thereby justifying its application. 

{\color{black}
\begin{remark}
We emphasize that Theorem 1 should be regarded as a motivating result for our latent variable formulation rather than as a central theoretical contribution. It primarily serves to illustrate the representational capacity of latent variable models, not to provide a fully general or exhaustive theoretical statement. While the theorem is presented under a diffeomorphism assumption for analytical simplicity, recent results such as \citet{dahal2022deep} and \citet{wang2025deep} demonstrate that mappings from simple base distributions to arbitrary target distributions can exist without this assumption, suggesting possible extensions beyond the present theoretical scope. Nevertheless, these generalizations fall beyond the present scope but may offer promising directions for future theoretical work.
\end{remark}
}

Henceforth, we will refer to (\ref{equ:gt}) as the ground-truth model on which we build our method throughout the paper.

\section{{Conditional} no self-censoring {given} latent variables
}\label{gt_model}


Without any assumption on the missingness mechanism, the identification of the joint distribution of $(X,R)$ is unachievable when data are MNAR.
To solve this identifiability issue, we make the following assumption about the missingness mechanism.
\begin{assumption}[{Conditional} no self-censoring {given} latent variables]\label{nscass}
    We assume that
    \begin{equation}\label{nsceq}
        R_j \ci{(X_j, R_{- j})}~ \Big| ~ X_{- j}, \tilde{Z} \quad \text{for} \quad j = 1, ..., p,
    \end{equation}
    where $A_{-j} = (A_1, ..., A_{j-1}, A_{j+1}, ..., A_p)$ denotes a subvector of $A$ with the $j^\text{th}$ entry removed. 

\end{assumption}


%


Assumption \ref{nscass} incorporates two parts: conditional independence among missingness indicators in $R$ and conditional independence between each missingness indicator $R_j$ and its corresponding data variable $X_j$.
In fact, conditional independence assumption among missingness indicators given $X$ and $\tilde{Z}$ is natural under the latent variable model for $R$ and the independence between $X$ and $\tilde{Z}$, as $X$ and $\tilde{Z}$ are supposed to capture all the dependence among missingness indicators.
Besides that, our Assumption \ref{nscass} requires that the conditional independence related to $R_j$ still holds even if we exclude $X_j$ from the given variables. This and the second part (conditional independence between each missingness indicator and its corresponding data variable) are where ``no self-censoring'' involved.

{
\color{black}
From a practical standpoint, there are many real-world situations in which a no self-censoring mechanism is plausible. For instance, in the clinical evaluation of Alzheimer’s disease, the decision to perform an invasive lumbar puncture depends on prior examination results, physician judgment, and patient preference, rather than on the unobserved values of the cerebrospinal fluid (CSF) biomarkers themselves \citep{lo2012predicting}. More generally, choices about whether to conduct additional diagnostic tests are typically informed by available clinical evidence, not by the unseen outcomes of the tests \citep{djulbegovic2015rational}. Such scenarios illustrate a no self-censoring mechanism, where missingness is independent of the variable itself once other relevant variables are taken into account.
}

\begin{figure}
     \centering
     \begin{subfigure}[a]{0.2\textwidth}
         \centering
         \vspace{-1.1in}
         \includegraphics[width=\textwidth]{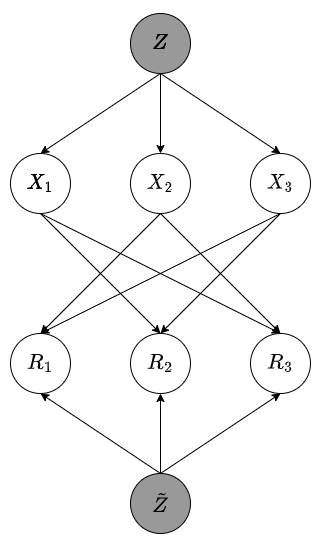}
     \end{subfigure}
    %
    \hspace{1in}
     \begin{subfigure}[b]{0.2\textwidth}
         \centering
         \includegraphics[width=\textwidth]{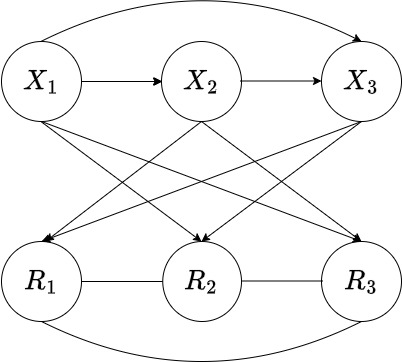}
     \end{subfigure}
     \vspace{-0.1in}
     \caption{Left: An example architecture of conditional no self-censoring given latent variables for 3 variables. Right: An example architecture of the traditional no self-censoring independence model for 3 variables. }\label{fig:nsc}
     \vspace{-10pt}
\end{figure}


An example architecture of our {conditional} no self-censoring model {given} latent variables is illustrated on the left panel of Figure \ref{fig:nsc} with $p=3$, where each missingness indicator $R_j$  is generated from $X_{-j}$ along with $\tilde{Z}$ for $j=1,2,3$. Note that given $X_{-j}$ and $\tilde{Z}$, Assumption \ref{nscass} excludes the conditional dependency between $X_j$ and its own missingness $R_j$, but does not exclude the marginal dependencies between them.

Compared to existing no self-censoring assumptions $R_j \ci{X_j}|X_{-{j}},R_{-{j}}$ 
and $R_j \ci{(X_j, R_{- j})}|X_{-{j}}$
\citep{sadinle2017itemwise, shpitser2016consistent, malinsky2021semiparametric,
mohan2013graphical, mohan2021graphical}, Assumption \ref{nscass} accounts for latent variables and requirements from latent variable models.
For a better comparison between these assumptions,
we provide an example architecture of the $R_j \ci{X_j}|X_{-{j}},R_{-{j}}$ assumption on the right panel of Figure \ref{fig:nsc}. An example of the $R_j \ci{(X_j, R_{- j})}|X_{-{j}}$ assumption can be achieved if we
eliminate the dependence of missingness indicators $R$ on the latent variables $\tilde{Z}$ in the left panel of Figure \ref{fig:nsc}.

Under our proposed conditional no self-censoring assumption, we establish the identifiability of the ground-truth missingness mechanism in the following theorem.

\begin{theorem}[Identifiability of missingness mechanism]\label{mechanism}
  The ground-truth missingness mechanism $p_{gt}(r|x)$ is nonparametrically identifiable under Assumptions \ref{independence} and \ref{nscass}.
\end{theorem}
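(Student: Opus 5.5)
The plan is to reduce the latent-variable mechanism to a finite family of ``all-observed'' probability functions and then identify these by induction on the number of missing coordinates, following the usual no self-censoring identification argument. Identifiability here means that any two full-data laws satisfying Assumptions \ref{independence} and \ref{nscass} and inducing the same observed-data law $p(r, x_{(r)})$ have the same $p_{gt}(r|x)$.

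\emph{Step 1: product form and the functions $g_S$.} By Assumption \ref{nscass}, for each $\tilde z$ the indicators are conditionally independent with $p_{R|X,\tilde Z}(r|x,\tilde z)=\prod_{j=1}^p \pi_j(x_{-j},\tilde z)^{r_j}\{1-\pi_j(x_{-j},\tilde z)\}^{1-r_j}$, where $\pi_j(x_{-j},\tilde z)=P(R_j=1\mid X_{-j}=x_{-j},\tilde Z=\tilde z)$. For $S\subseteq\{1,\dots,p\}$ define
\begin{equation*}
g_S(x)=P(R_S=1\mid X=x)=\int \prod_{j\in S}\pi_j(x_{-j},\tilde z)\,p_{\tilde Z}(\tilde z)\,d\tilde z ,
\end{equation*}
which uses Assumption \ref{independence} to integrate against $p_{\tilde Z}$ free of $x$. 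Expanding the factors $1-\pi_j$ by inclusion--exclusion gives, with $O(r)=\{j:r_j=1\}$,
\begin{equation*}
p_{gt}(r|x)=\sum_{S\supseteq O(r)}(-1)^{|S|-|O(r)|}\,g_S(x).
\end{equation*}
Hence it suffices to identify every $g_S$ on the support of $p_{gt}(x)$. This expansion is where the latent structure is used: it replaces the intractable mixture over $\tilde Z$ by finitely many functions. The structural constraint these functions inherit is that $g_S$ depends only on $x_{-j}$ when $S=\{j\}$, and $g_\emptyset\equiv 1$. I would also record the more general fact that $g_S$ depends on $x_j$ only through the factors $\pi_k$ with $k\in S\setminus\{j\}$.

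\emph{Step 2: observed-data equations.} For a pattern $r$ with missing set $M=\{j:r_j=0\}$, the observed law is
\begin{equation*}
p(r,x_{(r)})=\int p_{gt}(x)\sum_{S\supseteq O(r)}(-1)^{|S|-|O(r)|}g_S(x)\,dx_M .
\end{equation*}
The pattern $r=\mathbf 1$ identifies $p_{gt}(x)g_{[p]}(x)$ directly. I would then induct on $|M|$, mirroring the argument of \citet{malinsky2021semiparametric} and \citet{sadinle2017itemwise}, and try to show that at stage $|M|=k$ the new unknowns entering the equation for pattern $r$ are exactly those functions that do not depend on $x_M$. The prototype is the single-missing case $M=\{j\}$. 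Pattern $\mathbf 1$ gives $p_{gt}(x)\,g_{[p]}(x)$ with $g_{[p]}$ carrying a factor $\pi_j$ free of $x_j$. Using Step 1 to write $g_{[p]}=g_{[p]\setminus\{j\}}-P(R_{-j}=1,R_j=0\mid x)$, and noting that $\pi_j$ does not involve $x_j$, the observed density at $(R_j=0,R_{-j}=1)$ can be expressed as an integral over $x_j$ of identified quantities times the odds $\{1-\pi_j\}/\pi_j$ evaluated at $x_{-j}$. This pins down that odds as a function of $x_{-j}$, at least after averaging over $\tilde Z$.

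\emph{Main obstacle.} The difficulty is that $\pi_j$ depends on $\tilde z$. After conditioning on $R_{-j}$, the posterior of $\tilde Z$ can depend on $x_j$ through the other indicators. Consequently, the marginal no self-censoring condition $R_j\ci X_j\mid X_{-j},R_{-j}$ need not hold, and the classical odds-ratio factorization cannot be quoted directly. My first attempt would be to carry out the induction on the functions $g_S$ rather than on conditional odds ratios, since the $g_S$ are linear in the mixture and each depends on $x_j$ only through the factors $\pi_k$ with $k\in S\setminus\{j\}$. I would then show that the triangular system in Step 2 has a unique solution, matching the number of unknown functions to the number of observed-data equations. If a gap remains at this step, I expect it to require restricting attention to the mixture-averaged quantities $g_S$, which is all that $p_{gt}(r|x)$ depends on, rather than to the individual $\pi_j(\cdot,\tilde z)$.
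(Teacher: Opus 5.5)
\textbf{There is a genuine gap, and it sits exactly at the step you flag.} It cannot be repaired.

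The structural restriction you record on the $g_S$ is vacuous for $S=[p]$ once $p\ge 2$. Every $x_j$ enters $g_{[p]}$ through some $\pi_k$ with $k\neq j$. So $g_{[p]}(x)=P(R=\mathbf 1\mid x)$ is an essentially free function of all of $x$, and it reaches the observed law only through the product $p_{gt}(x)\,g_{[p]}(x)$.

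In the classical itemwise no self-censoring model, the log-linear interaction of $R_S$ may depend only on $x_{-S}$, so the top-order interaction is a constant. The identification arguments of \citet{sadinle2017itemwise} and \citet{malinsky2021semiparametric} rely on exactly this, and mixing over $\tilde Z$ removes it.

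The same point breaks your single-missing prototype. The ratio $P(R_j=0,R_{-j}=\mathbf 1\mid x)/P(R=\mathbf 1\mid x)=\int(1-\pi_j)\prod_{k\neq j}\pi_k\,dP_{\tilde Z}\big/\int\prod_{k}\pi_k\,dP_{\tilde Z}$ depends on $x_j$ through the factors $\pi_k$, $k\neq j$. So no odds function of $x_{-j}$ can be pulled out of the $x_j$-integral, even after averaging over $\tilde Z$.

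For $p=2$ the count is as follows. The unknowns are $p_{gt}(x_1,x_2)$, $g_{\{1\}}(x_2)$, $g_{\{2\}}(x_1)$ and $g_{\{1,2\}}(x_1,x_2)$. The data identify only $p_{gt}\,g_{\{1,2\}}$, $\int p_{gt}\,g_{\{1\}}\,dx_2$ (a function of $x_1$) and $\int p_{gt}\,g_{\{2\}}\,dx_1$ (a function of $x_2$). One two-variable function is left free. Your fallback of working only with the mixture-averaged $g_S$ does not help, because the underdetermination already lives at that level.

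\textbf{The statement itself fails under Assumptions~\ref{independence} and~\ref{nscass} alone}, so no version of your induction can succeed.

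Take $X\in\{0,1\}^2$ with $p(0,0)=p(1,1)=0.4$ and $p(0,1)=p(1,0)=0.1$. Write $a_k=g_{\{1\}}(k)$, $b_i=g_{\{2\}}(i)$ and $c_{ik}=g_{\{1,2\}}(i,k)$ for $x=(i,k)$.

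Any $(a,b,c)$ with $c_{ik}\ge 0$, $\sum_k c_{ik}\le b_i$, $\sum_i c_{ik}\le a_k$ and $\sum_k a_k+\sum_i b_i-\sum_{i,k}c_{ik}\le 1$ is realized under both assumptions. Let $\tilde Z\sim\mathcal U(0,1)$ be independent of $X$. Let $E_{ik},F_i,G_k$ be disjoint subintervals of $[0,1]$ of lengths $c_{ik}$, $b_i-\sum_k c_{ik}$ and $a_k-\sum_i c_{ik}$. Then set $\pi_1(x_2,u)=\mathbf 1\{u\in G_{x_2}\cup\bigcup_i E_{ix_2}\}$ and $\pi_2(x_1,u)=\mathbf 1\{u\in F_{x_1}\cup\bigcup_k E_{x_1k}\}$.

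Start from $a\equiv b\equiv 0.2$ and $c\equiv 0.05$. Now perturb as follows.
\begin{enumerate}[(i)]
\item Put $p'(0,0)=0.41$, $p'(0,1)=0.09$ and $p'(1,\cdot)=p(1,\cdot)$.
\item Set $c'=pc/p'$.
\item Solve $\sum_k p'(i,k)a'_k=\sum_k p(i,k)a_k$ and $\sum_i p'(i,k)b'_i=\sum_i p(i,k)b_i$.
\end{enumerate}
This gives $a'\equiv 0.2$, $b'\approx(0.1935,0.2065)$, $c'(0,0)\approx 0.0488$, $c'(0,1)\approx 0.0556$ and $c'(1,\cdot)=0.05$. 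These values still satisfy the realizability inequalities.

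The observed law is unchanged:
\begin{enumerate}[(i)]
\item $P(R=\mathbf 1,X=x)=p(x)c_x$;
\item $P(R=(1,0),X_1=i)=\sum_k p(i,k)(a_k-c_{ik})$;
\item $P(R=(0,1),X_2=k)=\sum_i p(i,k)(b_i-c_{ik})$;
\item $P(R=\mathbf 0)$, which is the complement.
\end{enumerate}
Yet $P(R=\mathbf 1\mid X=(0,0))$ moves from $0.05$ to about $0.0488$. Positivity holds throughout. The example lifts to an everywhere-positive density on $\mathbb R^2$ by using quadrant-wise products of half-line densities and letting the mechanism depend on $x$ only through signs.

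\textbf{The paper's own proof has the same hole.} It rewrites $p_{gt}(r\mid x)$ as $\int\prod_j p_{gt}(r_j\mid x_{-j},R_{-j}=\mathbf 1,\tilde z)\,p(\tilde z)\,d\tilde z$ and declares this an observed-data functional. But $\tilde Z$ is never observed, so these conditionals are not identified. What is missing is an extra restriction, not a cleverer induction. One option is dropping $\tilde Z$, which yields a submodel of the itemwise model; another is constraining how $\tilde Z$ couples the indicators.
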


Nonparametric identifiability essentially requires that there is a one-to-one mapping between the full-data distribution and the observed data distribution, so that the distribution of interest is a function of observed data only and not a function of unobserved data at all \citep{sadinle2018sequential, malinsky2021semiparametric}.
For the identification of the full-data distribution, an additional mild assumption on the probability of observing all data variables is required. 

\begin{assumption}[Positivity]\label{psass}
Suppose that
    \begin{equation}\label{pos}
        p_{gt}(R=\bm{1}|X) > c
    \end{equation}
    {with probability 1} for some constant $c > 0.$
\end{assumption}

\begin{corollary}\label{cor1}
    Under Assumptions \ref{independence}--\ref{psass}, the full-data distribution with ground truth $p_{gt}(x,r)$ is nonparametrically identifiable.
\end{corollary}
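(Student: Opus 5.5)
Given Theorem \ref{mechanism}, which identifies $p_{gt}(r\mid x)$ for every $r$ and $x$ as a functional of the observed-data law, the plan is the standard inverse-probability argument for the complete-case pattern. Among all observed-data quantities, only the pattern $R=\bm 1$ reveals the whole vector $X$. So the observed-data distribution determines the subdensity
\begin{equation*}
q(x) := p_{gt}(x, R=\bm 1) = p_{gt}(R=\bm 1\mid x)\, p_{gt}(x).
\end{equation*}
By Theorem \ref{mechanism}, the factor $p_{gt}(R=\bm 1\mid x)$ is also an identified function of $x$. By Assumption \ref{psass} it is bounded below by $c>0$ for almost every $x$. We can therefore divide and obtain
\begin{equation*}
p_{gt}(x) = \frac{q(x)}{p_{gt}(R=\bm 1\mid x)} \quad \text{a.e.},
\end{equation*}
which is a functional of the observed data alone. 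Multiplying back, $p_{gt}(x,r)=p_{gt}(x)\,p_{gt}(r\mid x)$ is identified for every $r\in\{0,1\}^p$.

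To phrase this as nonparametric identifiability, take two full-data laws $p$ and $p'$ that satisfy Assumptions \ref{independence}--\ref{psass} and induce the same observed-data law. By Theorem \ref{mechanism}, $p(r\mid x)=p'(r\mid x)$ for all $r$ and almost every $x$. Equality of the observed laws on the pattern $\bm 1$ gives $p(x)p(\bm 1\mid x)=p'(x)p'(\bm 1\mid x)$. Positivity lets us cancel the common nonzero factor, so $p(x)=p'(x)$ almost everywhere, and hence $p(x,r)=p'(x,r)$.

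The step needing care is the measure-theoretic role of positivity. The division must hold almost everywhere with respect to the law of $X$, and that law is exactly what we are trying to identify. The bound in Assumption \ref{psass} holds with probability one under $p_{gt}$. On a set where the bound could fail, $p_{gt}(x)$ vanishes, and the complete-case subdensity $q(x)$ vanishes as well. Hence the formula $p_{gt}(x)=q(x)/p_{gt}(\bm 1\mid x)$ is correct off a null set, once one uses the convention that it equals zero wherever $q$ does. Under the uniform bound $c$ this is routine.

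One compatibility point is that $\mathcal X$ may be a lower-dimensional manifold. Densities are then taken with respect to the volume measure, and the argument goes through unchanged at the level of measures: $\mu_{gt}(dx) = \mu_{gt}(dx, R=\bm 1)/p_{gt}(\bm 1\mid x)$.
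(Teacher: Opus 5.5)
Your argument is correct as a derivation from Theorem~\ref{mechanism} and is essentially the paper's proof. The paper writes $p_{gt}(x,r)$ in the odds-ratio parametrization with reference values $r_0=\bm 1$, $x_0=\bm 0$; once its denominator is evaluated as $p_{gt}(R=\bm 1\mid X=\bm 0)/P(R=\bm 1)$, that formula is exactly your identity $p_{gt}(x,r)=p_{gt}(r\mid x)\,p_{gt}(x,R=\bm 1)/p_{gt}(R=\bm 1\mid x)$, with Theorem~\ref{mechanism} supplying the mechanism and positivity licensing the division (your direct form also avoids needing $x_0=\bm 0$ to lie in the support).

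Be aware, though, that both arguments place all the identifying content in Theorem~\ref{mechanism}, and that theorem's proof treats $p_{gt}(r_j\mid x_{-j},R_{-j}=\bm 1,\tilde z)$ as an observed-data quantity even though $\tilde Z$ is latent and dependent on $R_{-j}$. A dimension count for $p=2$ with binary $X$ shows the imported step, and hence the corollary, does not hold in general: the model has three parameters for $p_{gt}(x)$ plus eight for the mechanism, namely $P(R_1=1\mid x_2)$, $P(R_2=1\mid x_1)$ and an otherwise unrestricted $P(R=\bm 1\mid x)$, against only eight observed-data parameters.
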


This result further implies that the ground-truth data distribution $p_{gt}(x)$ of $X$ is also nonparametrically identifiable. 
The identifiability not only serves as a desirable property itself, but also lays the foundation for the subsequent theoretical results of our proposed method on recovery of the full-data distribution.

}

{
\section{Method}\label{sec: method}
\subsection{The working model}
In this subsection, we propose to approximate the true full-data distribution $p_{gt}({x},{r})$ using a parametric working model.
The proposed working model is
\begin{equation*}
    p_{\theta, \psi}({x},{r}) = p_{\theta}({x})p_{\psi}({r}|{x}),
\end{equation*}
where 
\begin{equation}\label{x_fact}
    p_{\theta}(x) = \int \prod^p_{j=1} p_{\theta}(x_j|z) p(z)dz
    = \int p_{\theta}(x_{(r)}|z) p_{\theta}(x_{(m)}|z) p(z)dz
\end{equation}
and
\begin{equation}\label{equ:joint}
     p_{\psi}({r}|{x}) = \int \prod^p_{j=1} p_{\psi}({r_j}|{x_{-j}}, \tilde{{z}}) p(\tilde{{z}})  d\tilde{{z}}.
\end{equation}
Here 
$x_{(r)}$ and $x_{(m)}$ represent 
vectors consisting of observed and missing elements in $x$, respectively,
$p_{\theta}$ and $p_{\psi}$ are known distributions up to unknown parameters $\theta\in \Omega_{\theta}$ and $\psi\in \Omega_{\psi}$, respectively, and $p(z)$ and $p(\tilde{z})$ are density functions of latent variables $Z$ and $\tilde Z$, respectively, where $\Omega_{\theta}$ and $\Omega_{\psi}$ are parameter spaces.
We will provide specific forms for the known distributions $p_{\theta}$ and $p_{\psi}$ 
in {Section \ref{implementation}.}
In this working model, we let latent variables $Z\sim \mathcal{N}(0, I_{\kappa_1})$ and $\tilde{Z}\sim \mathcal{N}(0, I_{\kappa_2})$, where
the dimensions $\kappa_1$ and $\kappa_2$ of the $Z$ and $\tilde Z$ are not required to be the same as those ($q_1$ and $q_2$) in the ground-truth model since that information is lacking in practice. 

Note that  conditional independence  among $X$ given latent variable $Z$ is used in Equation (\ref{x_fact}), which enables us to separate observed and missing elements in $x$ given $Z$.
In fact, it does not exclude the correlations among $X$ themselves. 
The conditional no self-censoring in Assumption \ref{nscass} is preserved in Equation (\ref{equ:joint}) of the working model.
We will demonstrate in Section \ref{theory_recover} that the factorization in Equations (\ref{x_fact}) and (\ref{equ:joint}) does not prevent the proposed model from recovering the ground-truth full data distribution.

We propose to estimate the parameters $(\theta, \psi)$ through maximizing the log-likelihood of observed data under the working model:
\begin{equation}\label{log_p_obs}
    \log p_{\theta, \psi}(x_{(r)},r) = \log \iiint  p_{\theta}(x_{(r)}|z) p_{\theta}(x_{(m)}|z)\prod^p_{j=1} p_{\psi}(r_j|x_{- j}, \tilde{z})p(z)p(\tilde{z}) dz d\tilde{z} dx_{(m)}.
\end{equation}
However, the integral over missing and latent variables makes the maximization intractable. We will provide an efficient approximation to the log-likelihood in the following subsection.

\subsection{Variational inference}\label{sec:var_inf}
In this subsection, we introduce a variational inference method to solve the intractability issue of the integral in Equation (\ref{log_p_obs}).
Specifically, we propose variational distributions $q_{\phi}(z|x_{(r)}, r)$ and $q_{\lambda}(\tilde{z}|x_{(r)}, r)$ as approximations to the posteriors $p_{\theta, \psi}(z|x_{(r)}, r)$ and $p_{\theta, \psi}(\tilde{z}|x_{(r)}, r)$ in the working model, respectively,
where $\phi \in \Omega_{\phi}$ and $\lambda \in \Omega_{\lambda}$ are unknown parameters with parameter spaces $\Omega_{\phi}$ and $\Omega_{\lambda}$. We incorporate the variational distributions through rewriting the log-likelihood $\log p_{\theta, \psi}(x_{(r)},r)$ of the observed data  as
\begin{equation*}
   \log p_{\theta, \psi}(x_{(r)},r)= \log \mathbb{E}_{q_{\phi}(z|x_{(r)}, r) q_{\lambda}(\tilde{z}|x_{(r)}, r)p_{\theta}(x_{(m)}|z)} \left[ \frac{p_{\theta}(x_{(r)}|z) p_{\psi}(r|x_{(r)}, x_{(m)}, \tilde{z})p(z)p(\tilde{z})}{q_{\phi}(z|x_{(r)}, r)q_{\lambda}(\tilde{z}|x_{(r)}, r)}\right].
\end{equation*}
The evidence lower bound (ELBO)  \citep{kingma2013auto} for this log-likelihood is
\begin{equation*}
     \mathcal{L}_1(\theta, \psi, \phi, \lambda;x_{(r)}, r) = \mathbb{E}_{q_{\phi}(z|x_{(r)}, r) q_{\lambda}(\tilde{z}|x_{(r)}, r)p_{\theta}(x_{(m)}|z)} \left[\log \frac{p_{\theta}(x_{(r)}|z) p_{\psi}(r|x_{(r)}, x_{(m)}, \tilde{z})p(z)p(\tilde{z})}{q_{\phi}(z|x_{(r)}, r)q_{\lambda}(\tilde{z}|x_{(r)}, r)}\right].
\end{equation*}
For simplicity, we denote the fraction inside the logarithm as 
\begin{equation}\label{omega}
    w = \frac{p_{\theta}(x_{(r)}|z) p_{\psi}(r|x_{(r)}, x_{(m)}, \tilde{z})p(z)p(\tilde{z})}{q_{\phi}(z|x_{(r)}, r)q_{\lambda}(\tilde{z}|x_{(r)}, r)}.
\end{equation}

Similar to the approach proposed in \citet{ipsen2020not} inspired by the importance-weighted variational inference in \cite{burda2015importance},
we improve the above ELBO through
replacing the fraction inside the logarithm in the ELBO with a Monte Carlo estimate of it. Thus our proposed objective function is
\begin{equation}\label{objective}
    \mathcal{L}_K (\theta, \psi, \phi, \lambda;x_{(r)}, r) = \mathbb{E}_{\prod_{k=1}^K q_{\phi}(z_k|x_{(r)}, r) q_{\lambda}(\tilde{z}_k|x_{(r)}, r)p_{\theta}(x_{(m),k}|z_k)} \left[\log \frac{1}{K} \sum^K_{k=1} w_k\right],
\end{equation}
where 
\begin{equation*}
    w_k =\frac{p_{\theta}(x_{(r)}|z_k) p_{\psi}(r|x_{(r)}, x_{(m),k}, \tilde{z}_k)p(z_k)p(\tilde{z}_k)}{q_{\phi}(z_k|x_{(r)}, r)q_{\lambda}(\tilde{z}_k|x_{(r)}, r)},
\end{equation*}
$X_{(m),k}$, $Z_k$ and $\tilde{Z}_k$ follow the same distributions as $X_{(m)}$, $Z$ and $\tilde{Z}$, respectively,
for $1 \leq k \leq K.$ Note that when $K = 1$, the lower bound (\ref{objective}) coincides with ELBO.
In Section \ref{theory_bounds}, we will show that 
$\mathcal{L}_K$ is a tighter lower bound for the log-likelihood than $\mathcal{L}_1$ for $K > 1$.
 The lower bounds of the log-likelihood based on variational distributions approximate the log-likelihood {in the working model} while circumventing the issue of intractability. \textcolor{black}{Despite our common foundation with the model of not-MIWAE by \citet{ipsen2020not} built on importance weighted variational inference and missingness mechanisms, we note a main difference of the proposed objective function:  we have an additional latent variable $\tilde{Z}$ for $R$, and hence an additional $q_{\lambda}(\tilde{z}|x_{(r)}, r)$ as approximations to the posterior $p_{\theta, \psi}(\tilde{z}|x_{(r)}, r)$, and prior $p(\tilde{z})$, along with the conditional independence of $R_j$ and $X_j$ given $\tilde{Z}$. These are critical for incorporating correlations among the missingness indicators and identifiability. }
 
Since the expectation in $\mathcal{L}_K$ is unachievable from samples, we define the per-sample estimator of $\mathcal{L}_K$,
\begin{align*}
    \hat{\mathcal{L}}_K=\log \frac{1}{K} \sum^K_{k=1} w_k,
\end{align*}
and use $\hat{\mathcal{L}}_K$ as our objective in practice.
 We estimate the variational parameters ($\phi, \lambda$) and the generative parameters $(\theta, \psi)$ through maximizing $\hat{\mathcal{L}}_K$. 

{\color{black}We note that this working parametric deep latent variable model architecture explicitly incorporates the proposed no self-censoring condition (Assumption \ref{nscass}) through Equation (\ref{equ:joint}), which directly constrains the log-likelihood, its lower bound, and the empirical lower bound that defines the objective function. Consequently, the condition required for identifiability is enforced during learning, 
 guiding the model to generalize to the shape of the complete-data distribution.}

\subsection{Implementation}\label{implementation}
In this subsection, we provide implementation details of the proposed method based on the stochastic gradient descent algorithm.
As in the architectures of VAEs, we refer to the inference models $q_{\phi}(z|x_{(r)}, r)$ and $q_{\lambda}(\tilde{z}|x_{(r)}, r)$ as {\it probabilistic encoders}, and $p_{\theta}(x|z)$ and $p_{\psi}(r|x, \tilde{z})$ as {\it probabilistic decoders},
for which we have the following conventions. 



    The encoders are set to take the form of multivariate Gaussian such that $q_{\phi}(z|x_{(r)}, r) = \mathcal{N}(\mu_z, \Sigma_z)$ and $q_{\lambda}(\tilde{z}|x_{(r)}, r) = \mathcal{N}(\mu_{\tilde{z}}, \Sigma_{\tilde{z}}),$ where $\mu_z = f_{\mu_z}(x_{(r)}, r;\phi)$ and $\Sigma_z = S_zS_z^T$ with $S_z = f_{S_z}(x_{(r)}, r;\phi)$, and $\mu_{\tilde{z}} = f_{\mu_{\tilde{z}}}(x_{(r)}, r;\lambda)$ and $\Sigma_{\tilde{z}} = S_{\tilde{z}}S_{\tilde{z}}^T$ with $S_{\tilde{z}} = f_{S_{\tilde{z}}}(x_{(r)}, r;\lambda)$.
For the decoders, a Gaussian decoder is adopted for the data: $p_{\theta}(x|z) = \mathcal{N}(\mu_x, \Sigma_x),$ with decoder moments $\mu_x = f_{\mu_x}(z;\theta)$ and $\Sigma_x = \gamma I$ for a tunable scalar $\gamma >0.$  For the binary missingness indicators, a Bernoulli decoder is used for each dimension $1 \leq j \leq p$: $p_{\psi}(r_j|x, \tilde{z}) = \text{Bernoulli}(\pi_j),$ 
where $\pi_j = f_{r_j}(x,\tilde{z};\psi)$. 
Here $f_{\mu_z}, f_{S_z}, f_{\mu_{\tilde{z}}}, f_{S_{\tilde{z}}}, f_{\mu_x} $ and $f_{r_j} $ specify parameterized functional forms that can be arbitrarily complex, and we use deep neural networks to approximate these functions.
%

For the missing part of the input to neural networks, we resort to an initial zero imputation following the common practice in \cite{nazabal2020handling, ipsen2020not,ma2021identifiable}. 
In the generative missingness model $p_\psi (r|x,\tilde{z})$, we use a mixture of the observed part of the data $x_{(r)}$ and the generated data for the original missing part $\hat{x}_{(m)}$ from the generative model $p_{\theta}(x|z)$. 
Parameters $\{\theta, \psi, \phi, \lambda\}$ are estimated by stochastic gradient descent, with reparameterization tricks \citep{kingma2013auto} for latent random variables and missing data so that Monte Carlo estimates of expectations with respect to $q_{\phi}(z|x_{(r), r})  q_{\lambda}(\tilde{z}|x_{(r)}, r) p_{\theta}(x_{(m)}|z)$ are differentiable. We summarize the algorithm for the implementation of the proposed method in Algorithm \ref{alg:cap}.



\begin{algorithm}[t]\small%
\caption{\textcolor{black}{The proposed IM-IWAE}}\label{alg:cap}
\begin{algorithmic}
\Require Missingness indicators $r^{(1)}, \cdots, r^{(n)}$, observed data $x^{(1)}_{(r^{(1)})}, \cdots, x^{(n)}_{(r^{(n)})}$,  batch size $b$, learning rate $\alpha$, number of importance samples $K$, decoder variance $\gamma$, maximum number of iterations $l_{\max}$,
deterministic neural networks: encoders $f_{\mu_z}, f_{S_z}, f_{\mu_{\tilde{z}}}, f_{S_{\tilde{z}}}$, decoders $f_{\mu_x}$, $f_{r_j}$ for $1 \leq j \leq p$
\Ensure $\phi, \lambda, \theta, \psi$
\State $\phi^{[0]}, \lambda^{[0]}, \theta^{[0]}, \psi^{[0]} \gets $random initialization
\For{$l \gets 1$ to $l_{\max}$ }
\State Select a batch $\{(x^{(1)}_{(r^{(1)})},r^{(1)}), \cdots, (x^{(b)}_{(r^{(b)})},r^{(b)})\}$ from the data at random
{
\State Let $\tilde{x}^{(i)}_{(m^{(i)})} = 0$ for $1 \leq i \leq b$, and $\tilde{X} = \{\tilde{x}^{(1)}, \cdots, \tilde{x}^{(b)}\}$
}
\For{each $\tilde{x}^{(i)} \in \tilde{X}$}
\State Generate noise $\epsilon_1, \cdots, \epsilon_K \overset{\text{i.i.d}}{\sim} \mathcal{N}(0,I_{\kappa_1 + \kappa_2})$%
\State $(\mu_z^T,\mu_{\tilde{z}}^T)^T \gets$  $\left(f^T_{\mu_z}(\tilde{x}^{(i)}; \phi^{[l-1]}), f^T_{\mu_{\tilde{z}}}(\tilde{x}^{(i)}; \lambda^{[l-1]})\right)^T$, with $\mu_z \in \mathbb{R}^{\kappa_1}, \mu_{\tilde{z}} \in \mathbb{R}^{\kappa_2}$, 
\State $(\sigma_z^T, \sigma_{\tilde{z}}^T)^T \gets$ $\left(f^T_{S_z}(\tilde{x}^{(i)}; \phi^{[l-1]}), f^T_{S_{\tilde{z}}}(\tilde{x}^{(i)}; \lambda^{[l-1]})\right)^T$, with $\sigma_z \in \mathbb{R}^{\kappa_1}, \sigma_{\tilde{z}} \in \mathbb{R}^{\kappa_2}$
\For{$k \gets 1$ to $K$ }
\State $(z_k^T, \tilde{z}^T_k)^T \gets  (\mu_z^T, \mu_{\tilde{z}}^T)^T + (\sigma_z^T, \sigma_{\tilde{z}}^T)^T \odot \epsilon_k$, where $\odot$ denotes Hadamard product
\State $\mu_x \gets$ $f_{\mu_x}\left(z_k; \theta^{[l-1]}\right)$,  $\hat{x}^{(i)}_k \gets \mathcal{N}\left(\mu_x,\gamma I\right)$
\For{$j \gets 1$ to $p$}
\State $\hat{r}^{(i)}_{j,k} \gets f_{r_j}\left((\bar{x}^{(i)}_k)_{-j}, \tilde{z}_k;\psi^{[l-1]}\right)$, where $ \bar{x}^{(i)}_k = \tilde{x}^{(i)} +\hat{x}^{(i)}_k \odot (1-r^{(i)})$
\EndFor
\EndFor
\State 
$\mathcal{L}_K^{(i)} = \log \frac{1}{K}\sum^K_{k=1}w_k^{(i)}$, where 
$w_k^{(i)} =\frac{p_{\theta^{[l-1]}}(\hat{x}^{(i)}_{(r^{(i)})}|z_k) p_{\psi^{[l-1]}}(\hat{r}^{(i)}_k|\bar{x}^{(i)}_k, \tilde{z}_k)p(z_k)p(\tilde{z}_k)}{q_{\phi^{[l-1]}}(z_k|x^{(i)}_{(r^{(i)})}, r^{(i)})q_{\lambda^{[l-1]}}(\tilde{z}_k|x^{(i)}_{(r^{(i)})}, r^{(i)})}$

\EndFor

\State $(\phi^{[l]}, \lambda^{[l]}, \theta^{[l]}, \psi^{[l]})^T \gets (\phi^{[l-1]}, \lambda^{[l-1]}, \theta^{[l-1]}, \psi^{[l-1]})^T - \alpha \cdot
{\nabla \sum^b_{i=1}\mathcal{L}^{(i)}_K(\phi^{[l-1]}, \lambda^{[l-1]}, \theta^{[l-1]}, \psi^{[l-1]})}$

\If{convergence}
\State break
\EndIf
\EndFor
\end{algorithmic}
\end{algorithm}

\subsection{Data imputation}\label{data_imp}
We propose to impute  missing values through 
$\underset{x_{impute}}{\min}\mathbb{E}_{x_{(m)}}[L(x_{(m)}, x_{impute})|x_{(r)},r]$, where $L$ is a criterion function. 
When $L$ corresponds to the squared error, the {proposed} imputation will be the conditional mean 
\begin{equation}\label{impute_formula}
    \mathbb{E}[x_{(m)}|x_{(r)},r] = \iiint x_{(m)} \frac{p_{\psi}(r|x_{(r)}, x_{(m)},\tilde{z}) p_{\theta}(x_{(r)}|z) p(z) p(\tilde{z})}{q_{\phi}(z|x_{(r)}, r)q_{\lambda}(\tilde{z}|x_{(r)}, r)} d z d \tilde{z} d x_{(m)}.
\end{equation}
After obtaining the estimated parameters $(\hat{\phi}, \hat{\lambda}, \hat{\theta}, \hat{\psi})$ through Algorithm \ref{alg:cap}, we use the self-normalized importance sampling \citep{mattei2019miwae,ipsen2020not}, with the proposal $q_{\hat{\phi}}(z|x_{(r)}, r)q_{\hat{\lambda}}(\tilde{z}|x_{(r)}, r)p_{\hat{\theta}}(x_{(m)}|z)$ to obtain an estimate
\begin{equation*}
    \hat{x}_{(m)} = \mathbb{E}[x_{(m)}|x_{(r)},r] \approx \sum^B_{b = 1} \alpha_b x_{(m), b}, \text{ with } \alpha_b = \frac{w_b}{w_1 + ... + w_B},
\end{equation*}
{where for $1 \leq b \leq B,$
\begin{equation*}
    w_b = \frac{p_{\hat{\theta}}(x_{(r)}|z_b) p_{\hat{\psi}}(r|x_{(r)}, x_{(m),b}, \tilde{z}_b)p(z_b)p(\tilde{z}_b)}{q_{\hat{\phi}}(z_b|x_{(r)}, r)q_{\hat{\lambda}}(\tilde{z}_b|x_{(r)}, r)},
\end{equation*}
and
 $\{(z_b, \tilde{z}_b, x_{(m),b})\}_{b=1}^B$ are $B$ $i.i.d.$ samples from $q_{\hat{\phi}}(z|x_{(r)}, r)q_{\hat{\lambda}}(\tilde{z}|x_{(r)}, r)p_{\hat{\theta}}(x_{(m)}|z)$.} 

In addition to missing data imputation, the proposed model can generate synthetic data from the latent variables through the learned distribution of the data.
This is an important advantage of generative models in the presence of missing data, since the generated data can be used to gain more insight into the distribution itself in addition to predicting missing values using the observed data, which we will illustrate in Section \ref{simulationstudy}.

\subsection{On the choice of latent dimensions}\label{cld}
In real-world data scenarios, where the true latent dimensions $q_1$, $q_2$ are unknown, the latent dimensions in the learning method, $\kappa_1$, $\kappa_2$, must be empirically set to chosen values. 
{
We have observed that varying the latent dimension for the missingness mechanism, $\kappa_2$, does not significantly impact performance, with $\kappa_2 = 1$ generally performing as well as higher values (See Appendix \ref{app_dimzt} for more details). \textcolor{black}{This aligns with our theoretical results and the proof in \ref{appx_mech}: 
a single continuous latent dimension for the missingness mechanism suffices to approximate the joint distribution of binary variables. Additional dimensions add no benefit, as they can be ignored in recovering the joint distribution. }
\textcolor{black}{In contrast, for $\kappa_1$, the dimension of the latent variable corresponding to the continuous data, setting its value just above the intrinsic dimensionality generally improves performance compared with smaller values, as it allows the latent space to represent the data distribution more faithfully under practical network capacity \citep{wang2025deep}.}

For 
$\kappa_1$, we propose the following  cross-validation selection procedure. One challenge in applying cross-validation to real-world data with missing values is that not all data points are available. To address this, we propose the following approach: for each training and validation dataset, both containing missing values, we train the model on the training data and mask some observed values in the validation data completely at random. We then use the trained model to impute the missing values in the validation set (including both the originally missing and synthetically masked values) and calculate the imputation errors solely on the synthetically masked entries.

Recognizing that these synthetically masked missing values are ignorable, we impute them using a different estimate, $\mathbb{E}[x_{(m)}|x_{(r)}]$, which does not depend on the values of the missingness indicators as in (\ref{impute_formula}). The value of $\kappa_1$ that minimizes the imputation errors on average across the validation sets is selected. See Figure \ref{fig:cv} in Section \ref{realexp} for an example of choosing the latent dimensions in the real data.


\section{Theory}\label{theory}
In this section, we establish asymptotic properties of the proposed objective function $\mathcal{L}_K$, as well as the capacity of the proposed method to recover ground-truth distributions.
All proofs are postponed to the supplementary material.

\subsection{Convergence of lower bounds}\label{theory_bounds}
In this subsection, we investigate theoretical limits on the approximation of the objective function to the log-likelihood of the observed data under the working model, along with the bias and variance of the objective function when we consider it as an estimator of the log-likelihood. 
\begin{theorem}\label{LK}
    For all $K \geq 1$, the lower bounds satisfy 
    \begin{equation*}
        \mathcal{L}_{K+1} (\theta, \psi, \phi, \lambda;x_{(r)}, r)\geq  \mathcal{L}_{K} (\theta, \psi, \phi, \lambda;x_{(r)}, r),
    \end{equation*}
    and
    \begin{equation*}
        \log p_{\theta, \psi}(x_{(r)},r) \geq  \mathcal{L}_{K} (\theta, \psi, \phi, \lambda;x_{(r)}, r) \quad \forall \phi \in \Omega_\phi, \lambda\in \Omega_\lambda.
    \end{equation*}
    Moreover, if $w$ in Equation (\ref{omega}) is bounded, 
    we have
    \begin{equation*}
        \log p_{\theta, \psi}(x_{(r)},r) = \lim_{K\rightarrow \infty} \mathcal{L}_{K} (\theta, \psi, \phi, \lambda;x_{(r)}, r) \quad \forall \phi \in \Omega_\phi, \lambda\in \Omega_\lambda.
    \end{equation*}
    
\end{theorem}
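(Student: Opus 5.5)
The argument follows \citet{burda2015importance}, adapted to the proposal distribution $Q(z,\tilde z,x_{(m)}) = q_{\phi}(z|x_{(r)},r)\,q_{\lambda}(\tilde z|x_{(r)},r)\,p_{\theta}(x_{(m)}|z)$. The key preliminary fact is that $w$ in (\ref{omega}) is an unbiased estimator of the observed-data likelihood under $Q$. By Fubini,
\begin{equation*}
\mathbb{E}_Q[w] = \iiint p_{\theta}(x_{(r)}|z)\,p_{\theta}(x_{(m)}|z)\,p_{\psi}(r|x_{(r)},x_{(m)},\tilde z)\,p(z)\,p(\tilde z)\,dz\,d\tilde z\,dx_{(m)} = p_{\theta,\psi}(x_{(r)},r),
\end{equation*}
using the factorization (\ref{log_p_obs}). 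Here the $q$ factors cancel and $p_\theta(x_{(m)}|z)$ is supplied by the sampling distribution. This requires the support of $q_\phi q_\lambda$ to cover that of $p(z)p(\tilde z)$, which is automatic since the encoders are Gaussian with full-rank covariance. Consequently the sample mean $\bar w_K = K^{-1}\sum_{k=1}^K w_k$ of i.i.d.\ copies is also unbiased.

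\emph{Upper bound.} By Jensen's inequality for the concave logarithm,
\begin{equation*}
\mathcal{L}_K = \mathbb{E}[\log \bar w_K] \le \log \mathbb{E}[\bar w_K] = \log p_{\theta,\psi}(x_{(r)},r),
\end{equation*}
for every $\phi\in\Omega_\phi$ and $\lambda\in\Omega_\lambda$.

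\emph{Monotonicity.} Write $\bar w_{K+1}$ as an average of the $K$-subset averages. For any index set $I\subset\{1,\dots,K+1\}$ with $|I|=K$, let $\bar w_I = K^{-1}\sum_{k\in I} w_k$. Then
\begin{equation*}
\bar w_{K+1} = \mathbb{E}_I\bigl[\bar w_I\bigr],
\end{equation*}
where $I$ is drawn uniformly among such subsets. Concavity gives $\log \bar w_{K+1} \ge \mathbb{E}_I[\log \bar w_I]$. Taking expectations over the samples, and using exchangeability so that each $\bar w_I$ has the same law as $\bar w_K$, yields $\mathcal{L}_{K+1}\ge \mathcal{L}_K$.

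\emph{Limit.} Suppose $0\le w\le C$. By the strong law of large numbers, $\bar w_K\to p_{\theta,\psi}(x_{(r)},r)$ almost surely, hence $\log\bar w_K\to\log p_{\theta,\psi}(x_{(r)},r)$ almost surely. I expect this step to be the main obstacle, because passing the limit inside the expectation needs an integrable dominator. From above this is immediate: $\log \bar w_K\le \log C$. From below, $\log \bar w_K$ can be very negative, so dominated convergence cannot be applied directly. I would handle this in two ways.
\begin{enumerate}[(i)]
\item Use monotonicity: since $\mathcal{L}_K$ is nondecreasing and bounded above by $\log p_{\theta,\psi}(x_{(r)},r)$, the limit exists. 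Reverse Fatou's lemma, valid because $\log\bar w_K\le\log C$, gives $\limsup_K \mathcal{L}_K\le \log p_{\theta,\psi}(x_{(r)},r)$, which is already known.
\item For the matching lower bound, truncate. For $\epsilon>0$, $\log\bar w_K \ge \log(\bar w_K\vee\epsilon) - \log(\epsilon^{-1}\ldots)$ on the event $\{\bar w_K<\epsilon\}$. Alternatively and more cleanly, use $\log x \ge 1 - 1/x$ together with the second-order delta-method bound: boundedness of $w$ gives finite variance, so Chebyshev bounds $P(\bar w_K<p/2)=O(1/K)$.
\end{enumerate}
On the complementary event $\{\bar w_K\ge p/2\}$, the variable $\log\bar w_K$ is bounded, so bounded convergence applies there. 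On the bad event, I need $\mathbb{E}\bigl[|\log\bar w_K|\,\mathbf 1\{\bar w_K<p/2\}\bigr]\to0$. By the monotonicity result, it suffices to have $\mathcal{L}_1>-\infty$, since all $\mathcal{L}_K\ge\mathcal{L}_1$. Combining this with $\log\bar w_K\ge K^{-1}\sum_k\log w_k$, which holds by concavity, gives a lower bound in which each term has the same law as $\log w$ and is integrable whenever $\mathcal{L}_1>-\infty$. The family $\{\log\bar w_K\}_K$ is then uniformly integrable from below, because the averages $K^{-1}\sum_k\log w_k$ are uniformly integrable as i.i.d.\ averages of an $L^1$ variable. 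Uniform integrability together with almost-sure convergence then gives $\mathcal{L}_K\to\log p_{\theta,\psi}(x_{(r)},r)$.

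If $\mathcal{L}_1=-\infty$, I would note that this degenerate case must be excluded or handled as a boundary case in the statement.
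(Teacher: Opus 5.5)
Your Jensen bound and subset-averaging argument for monotonicity are the paper's proof, which follows \citet{burda2015importance} essentially verbatim. The two proofs part ways at the limit. The paper stops at ``$M_K\to p_{\theta,\psi}(x_{(r)},r)$ almost surely by the SLLN, hence $\mathbb{E}[\log M_K]\to\log p_{\theta,\psi}(x_{(r)},r)$''. It never justifies exchanging limit and expectation, even though boundedness of $w$ controls $\log M_K$ only from above.

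Your argument supplies the missing step:
\begin{itemize}
\item $(\log\bar w_K)^+\le\log^+C$;
\item $(\log\bar w_K)^-\le|K^{-1}\sum_k\log w_k|$ by concavity (this bound, not monotonicity, does the work);
\item averages of i.i.d.\ copies of an $L^1$ variable are uniformly integrable;
\item Vitali's theorem then gives $\mathcal{L}_K\to\log p_{\theta,\psi}(x_{(r)},r)$.
\end{itemize}
Once you have this, the truncation/Chebyshev split in your item (ii) is superfluous, and its first sentence is garbled, so you can drop it. A slightly shorter finish is Fatou's lemma applied to the nonnegative variables $\log\bar w_K-K^{-1}\sum_k\log w_k$. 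These converge almost surely to $\log p_{\theta,\psi}(x_{(r)},r)-\mathbb{E}[\log w]$, which gives $\liminf_K\mathcal{L}_K\ge\log p_{\theta,\psi}(x_{(r)},r)$ directly.

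Your caveat about $\mathcal{L}_1=-\infty$ is not an artifact of your method. Under boundedness of $w$ alone, the third claim can actually fail. Since $P(\bar w_K\le t)\ge P(w\le t)^K$, take a bounded $w$ with, say, $P(w\le e^{-s})=1/\log s$ for $s\ge e$. Then $\mathbb{E}[-\log\bar w_K]\ge\int_e^\infty(\log s)^{-K}\,ds=\infty$ for every $K$. So $\mathcal{L}_K=-\infty$ for all $K$, while $\log p_{\theta,\psi}(x_{(r)},r)$ is finite. Nothing in the theorem's hypotheses excludes such a lower tail; a sufficiently pathological decoder $f_{r_j}$ produces one.

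So an integrability condition like yours, $\mathbb{E}|\log w|<\infty$ (for bounded $w$, equivalent to $\mathcal{L}_1>-\infty$), is genuinely needed. It can be weakened to $\mathcal{L}_{K_0}>-\infty$ for some $K_0$: run your argument on i.i.d.\ block averages of size $K_0$ along $K=mK_0$, then use monotonicity. It holds for the Gaussian encoders and decoders with sigmoid Bernoulli decoders having linear or bounded logits, as used in the experiments.
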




Theorem \ref{LK} demonstrates the convergence of the {lower} bound (the population-version objective function $\mathcal{L}_K$) of the proposed IM-IWAE method to the log-likelihood. 
A distinction of our bound from the one for not-MIWAE is that we integrate over two latent spaces for not only the data but also the missingness mechanism {\citep{ipsen2020not}}. Such a difference does not fundamentally alter the property of the bound in terms of the approximation to the log-likelihood.


In the following, we provide the bias and variance of the empirical-version objective function $\hat{\mathcal{L}}_K$ when we treat it as an estimate of the log-likelihood $\log p_{\theta, \psi}(x_{(r)},r)$.


\begin{proposition}[Bias and variance of $\hat{\mathcal{L}}_K$]\label{prop1}
Suppose that the distribution of $w$ defined in Equation (\ref{omega}) is
    supported on the positive real line with finite moments of every order.
    As $K \rightarrow \infty$, we have the following asymptotic results for $\hat{\mathcal{L}}_K$ as an estimator of $\log p_{\theta, \psi}(x_{(r)},r)$:
    \begin{equation*}
        \text{Bias}(\hat{\mathcal{L}}_K) = - \frac{1}{K}\frac{\mu_2}{2 \mu^2} + \frac{1}{K^2}\left( \frac{\mu_3}{3\mu^3} - \frac{3\mu_2^2}{4\mu^4}\right)  + o(K^{-2}),
    \end{equation*}
    and
    \begin{equation*}
        \text{Var}(\hat{\mathcal{L}}_K) = \frac{1}{K}\frac{\mu_2}{\mu^2} - \frac{1}{K^2} \left( \frac{\mu_3}{\mu^3} - \frac{5\mu^2_2}{2\mu^4}\right) + o(K^{-2}),
    \end{equation*}
    where $\mu_i \coloneq \mathbb{E}[(w - \mathbb{E}[w])^i]$ 
    and $\mu \coloneq \mathbb{E}_P[w]$. 
\end{proposition}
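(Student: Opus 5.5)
The plan is a delta-method (Taylor) expansion of the logarithm of the sample mean of the importance weights, as in the analysis of IWAE bounds by Maddison et al. and Nowozin. Write $\bar w_K = \frac1K\sum_{k=1}^K w_k$, where the $w_k$ are i.i.d. copies of $w$ in (\ref{omega}). Under the sampling distribution $q_\phi q_\lambda p_\theta(x_{(m)}|z)$, each $w_k$ has mean $\mu=\mathbb{E}[w]=p_{\theta,\psi}(x_{(r)},r)$; this is exactly the identity used to rewrite the log-likelihood before defining $\mathcal{L}_1$. Hence $\log p_{\theta,\psi}(x_{(r)},r)=\log\mu$, and both the bias and the variance are taken relative to $\log \mu$.

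Set $\delta=(\bar w_K-\mu)/\mu$, so that $\hat{\mathcal{L}}_K=\log\mu+\log(1+\delta)$. Expand
\[
\log(1+\delta)=\delta-\tfrac{\delta^2}{2}+\tfrac{\delta^3}{3}-\tfrac{\delta^4}{4}+R_5(\delta).
\]
The central moments of the sample mean are standard:
\[
\mathbb{E}[(\bar w_K-\mu)^2]=\frac{\mu_2}{K},\qquad
\mathbb{E}[(\bar w_K-\mu)^3]=\frac{\mu_3}{K^2},\qquad
\mathbb{E}[(\bar w_K-\mu)^4]=\frac{3\mu_2^2}{K^2}+\frac{\mu_4-3\mu_2^2}{K^3}.
\]
More generally, the $m$-th central moment of $\bar w_K$ is $O(K^{-\lceil m/2\rceil})$. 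Taking expectations term by term gives
\[
\mathbb{E}\log(1+\delta)=-\frac{\mu_2}{2K\mu^2}+\frac{\mu_3}{3K^2\mu^3}-\frac{3\mu_2^2}{4K^2\mu^4}+O(K^{-3}),
\]
which is the stated bias. For the variance, compute $\mathbb{E}[(\log(1+\delta))^2]$ from
\[
(\log(1+\delta))^2=\delta^2-\delta^3+\tfrac{11}{12}\delta^4+O(\delta^5).
\]
Its expectation is $\mu_2/(K\mu^2)-\mu_3/(K^2\mu^3)+\tfrac{11}{4}\mu_2^2/(K^2\mu^4)+O(K^{-3})$. Subtracting the square of the bias, whose leading part is $\mu_2^2/(4K^2\mu^4)$, gives $\tfrac{11}{4}-\tfrac14=\tfrac52$ as the coefficient of $\mu_2^2/(K^2\mu^4)$, which matches the claimed variance. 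These are routine bookkeeping steps.

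The main obstacle is making the term-by-term expectation rigorous, i.e., showing that the remainder contributes $o(K^{-2})$. The Taylor series of $\log(1+\delta)$ only converges for $|\delta|<1$, and $\log \bar w_K$ blows up as $\bar w_K\to0$. I would split on the event $A_K=\{|\bar w_K-\mu|\le \mu/2\}$. On $A_K$, the Lagrange remainder satisfies $|R_5(\delta)|\le C|\delta|^5$, and $\mathbb{E}|\delta|^5\le(\mathbb{E}\delta^6)^{5/6}=O(K^{-5/2})$ by the finite-moment hypothesis. On $A_K^c$, Markov's inequality with high moments gives $P(A_K^c)=O(K^{-s})$ for any $s$. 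It then remains to control $\mathbb{E}[|\log\bar w_K|^2\mathbf 1_{A_K^c}]$ via Cauchy--Schwarz. The upper tail of $\log\bar w_K$ is harmless, since $\log \bar w_K\le \bar w_K$. The lower tail is harder, because finite positive moments of $w$ alone do not bound $\mathbb{E}(\log \bar w_K)^2$ near zero. There I would use the assumption that $w$ is supported on the positive line, read as including integrability of $(\log w)^2$, together with concavity: $\log\bar w_K\ge \frac1K\sum_k\log w_k$ by Jensen, so $\mathbb{E}(\log\bar w_K)_-^4\le \mathbb{E}(\log w)_-^4<\infty$. This yields a polynomially negligible contribution, so all remainders are $o(K^{-2})$ and both expansions follow.
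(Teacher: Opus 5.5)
Your proposal follows the paper's proof: both Taylor-expand $\log\bar w_K$ about $\mu=p_{\theta,\psi}(x_{(r)},r)$ and substitute the central moments of a sample mean. Your variance bookkeeping via $\mathbb{E}[\log^2(1+\delta)]-(\mathbb{E}\log(1+\delta))^2$, which gives $\tfrac{11}{4}-\tfrac14=\tfrac52$, is equivalent to the paper's direct expansion of $\mathrm{Var}(\log Y_K)$. The paper's argument is purely formal, since it interchanges expectation with a log series valid only for $|\bar w_K-\mu|<\mu$, so your truncation on $A_K$ is a real improvement. Your extra lower-tail hypothesis is genuinely necessary: if $P(-\log w>t)=1/\log(e+t)$, then $w\in(0,1]$ has all moments yet $\mathbb{E}\log\bar w_K=-\infty$ for every $K$. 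You should state that hypothesis as $\mathbb{E}(\log w)_-^4<\infty$, because square-integrability of $\log w$ does not give the fourth moment your Cauchy--Schwarz step uses.
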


For each $(\theta, \psi)$, Proposition \ref{prop1} shows that both the bias and the variance of the estimator $\hat{\mathcal{L}}_K$ decrease at a rate of $O(1/K)$. Note that the leading  terms involves $\mu_2/\mu^2$, which is the variance of the distribution of $w$  divided by the squared mean, also known as the squared \emph{coefficient of variation}. Hence, for large $K$, the bias and the variance of $\hat{\mathcal{L}}_K$ are small when the coefficient of variation is low. 
This leads to the following convergence of $\hat{\mathcal{L}}_K$.
\begin{corollary}[Convergence of $\hat{\mathcal{L}}_K$]\label{consistency}
   For all $\epsilon >0$, 
   \begin{equation*}
       \lim_{K \to \infty} P(|\hat{\mathcal{L}}_K -\log p_{\theta, \psi}(x_{(r)},r) | \geq \epsilon) = 0
   \end{equation*}
\end{corollary}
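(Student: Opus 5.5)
The plan is to read Corollary \ref{consistency} as convergence in probability of $\hat{\mathcal{L}}_K=\log \frac1K\sum_{k=1}^K w_k$ to $\log p_{\theta,\psi}(x_{(r)},r)$, with $w_1,\dots,w_K$ i.i.d. copies of $w$ from Equation (\ref{omega}). Drawing $(z_k,\tilde z_k,x_{(m),k})$ from $q_{\phi}(z|x_{(r)},r)\,q_{\lambda}(\tilde z|x_{(r)},r)\,p_{\theta}(x_{(m)}|z)$ makes the $w_k$ independent. The first step is the unbiasedness identity $\mathbb{E}[w]=p_{\theta,\psi}(x_{(r)},r)=:\mu$. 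This is exactly the rewriting of the log-likelihood given before the ELBO in Section \ref{sec:var_inf}: integrate out $x_{(m)}$ against $p_\theta(x_{(m)}|z)$, then cancel the proposal densities $q_\phi q_\lambda$, assuming their supports cover those of the priors. The identity recovers the integral (\ref{log_p_obs}).

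Next, I apply the strong (or weak) law of large numbers. Since $w\ge 0$ and $\mathbb{E}[w]=\mu<\infty$, the sample mean $\bar w_K=\frac1K\sum_k w_k$ converges to $\mu$ almost surely, hence in probability. If $\mu>0$, i.e. the observed pattern has positive likelihood under the working model, then $t\mapsto\log t$ is continuous at $\mu$. The continuous mapping theorem then gives $\log\bar w_K\to\log\mu$ in probability, which is the claim.

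To avoid invoking the continuous mapping theorem abstractly, I will give the explicit bound. Given $\epsilon>0$, choose $\delta>0$ with $|\log t-\log\mu|<\epsilon$ whenever $|t-\mu|<\delta$; one can take $\delta=\mu(1-e^{-\epsilon})$. Then
$$P(|\hat{\mathcal{L}}_K-\log\mu|\ge\epsilon)\le P(|\bar w_K-\mu|\ge\delta).$$
The right-hand side tends to zero by the weak law, or by Chebyshev under the moment assumptions of Proposition \ref{prop1}, where it is at most $\mu_2/(K\delta^2)$. Alternatively, one could cite Proposition \ref{prop1} directly: the bias and variance are both $O(1/K)$, so the mean squared error tends to zero and Markov's inequality finishes the argument. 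I prefer the LLN route because it needs only integrability of $w$.

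The main obstacle is bookkeeping rather than analysis. I have to verify that $\mathbb{E}[w]$ really equals the observed-data likelihood under the working model. This requires the support condition on the encoders, automatic here since Gaussian encoders have full support, and $\mu>0$ so that the logarithm is finite and continuous at the limit. Both should be stated explicitly as standing conditions.
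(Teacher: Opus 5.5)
Your proof is correct, but it takes a different route from the paper's. The paper derives the corollary from Proposition~\ref{prop1} and splits the event into two parts. The first is a deterministic bias part $P(|\mathbb{E}[\hat{\mathcal{L}}_K]-\log p_{\theta,\psi}(x_{(r)},r)|\ge\epsilon/2)$, which is zero for large $K$ because the bias is $O(1/K)$. The second is a fluctuation part bounded by Chebyshev, $P(|\hat{\mathcal{L}}_K-\mathbb{E}[\hat{\mathcal{L}}_K]|\ge\epsilon/2)\le 4\,\mathrm{Var}(\hat{\mathcal{L}}_K)/\epsilon^2=O(1/K)$. This is essentially the ``alternative'' you mention and set aside.

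You instead work with $\bar w_K$ directly and pass to the logarithm by continuity at $\mu>0$. Your choice $\delta=\mu(1-e^{-\epsilon})$ is valid, since $|t-\mu|<\delta$ forces $\mu e^{-\epsilon}<t<\mu(2-e^{-\epsilon})\le\mu e^{\epsilon}$.

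Your route buys generality and rigor. It needs only $\mathbb{E}[w]<\infty$ (automatic, since this is the working-model likelihood), $\mu>0$, and full support of the encoders. The paper's route inherits the all-moments hypothesis of Proposition~\ref{prop1}. More importantly, it rests on a term-by-term series expansion of $\mathbb{E}[\log\bar w_K]$ and $\mathrm{Var}(\log\bar w_K)$ around $\mu$. That series has radius of convergence $\mu$, so exchanging expectation and summation is not justified without extra control of the deviations of $\bar w_K$, in particular near $0$.

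The paper's route nominally also gives an $O(1/K)$ rate. Your Chebyshev bound $\mu_2/(K\delta^2)$ on $P(|\bar w_K-\mu|\ge\delta)$ gives the same rate assuming only a finite second moment, with no log expansion.

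Your argument is closer in spirit to the paper's proof of Theorem~\ref{LK}, which applies the law of large numbers to the sample mean of the weights. It also shows that the boundedness of $w$ assumed there is unnecessary for convergence in probability. Stating $\mu>0$ and the support condition as standing hypotheses, as you propose, is appropriate, since the corollary as stated lists none.
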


\subsection{Recovery of the ground-truth full-data distribution}\label{theory_recover}
In this subsection, we establish the theoretical framework at the population level for recovering the  ground-truth full-data distribution using the proposed working model,
towards which a crucial step is to demonstrate that the proposed model can accurately approximate the ground-truth distribution of missingness indicators.


{\color{black}
\begin{lemma}\label{lemma_r}
	When we consider a Bernoulli distribution with mean $f_{r_j}(x_{-j},\tilde{z};\psi)$ as the decoder $p_{\psi}(r_j|x_{-j}, \tilde{z})$ of the missingness mechanism for $j=1,\dots, p$, 
 there exist $f_{r_j}$’s with parameter $\psi^*$ that recover the ground truth such that 
	\begin{equation*}
		p_{\psi^*}(r|x) = p_{gt}(r|x) .
	\end{equation*}
\end{lemma}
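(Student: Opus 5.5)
Under Assumptions \ref{independence} and \ref{nscass}, the ground truth already has the same conditional-independence structure as the working model (\ref{equ:joint}). So the plan is to realize the ground-truth conditional mechanism exactly by the decoders. Then we absorb the possibly different latent dimension and latent law into a single standard normal latent through a measurable transport map.

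First, factor the ground-truth mechanism. Assumption \ref{nscass} gives, for each $j$, $p_{R_j|X,\tilde Z,R_{-j}}(r_j|x,\tilde z,r_{-j}) = p_{R_j|X_{-j},\tilde Z}(r_j|x_{-j},\tilde z)$. Chaining these conditionals through $p_{R|X,\tilde Z}(r|x,\tilde z)=\prod_j p_{R_j|X,\tilde Z,R_{1:j-1}}$ yields
\begin{equation*}
p_{R|X,\tilde Z}(r|x,\tilde z)=\prod_{j=1}^p g_j(x_{-j},\tilde z)^{r_j}\{1-g_j(x_{-j},\tilde z)\}^{1-r_j},
\end{equation*}
with $g_j(x_{-j},\tilde z)=P(R_j=1|X_{-j}=x_{-j},\tilde Z=\tilde z)\in[0,1]$. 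Using Assumption \ref{independence}, $p_{gt}(r|x)=\int \prod_j \mathrm{Bern}(r_j;g_j(x_{-j},\tilde z))\,p_{\tilde Z}(\tilde z)\,d\tilde z$.

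Second, match the latent distributions. The ground-truth $\tilde Z\in\mathbb R^{q_2}$ has an arbitrary law, while the working model uses $\mathcal N(0,I_{\kappa_2})$. There is a measurable map $T:\mathbb R^{\kappa_2}\to\mathbb R^{q_2}$ with $T_\#\mathcal N(0,I_{\kappa_2})=P_{\tilde Z}$. This is a standard Borel-space fact: push the Gaussian to $\mathrm{Unif}(0,1)$ via the normal CDF of the first coordinate, then apply a measurable isomorphism or quantile construction to $\mathbb R^{q_2}$. This works even for $\kappa_2=1$, which is consistent with the remark in Section \ref{cld}.

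Third, set $f_{r_j}(x_{-j},u;\psi^*)=g_j(x_{-j},T(u))$. By change of variables,
\begin{equation*}
p_{\psi^*}(r|x)=\int\prod_j p_{\psi^*}(r_j|x_{-j},u)\,\phi(u)\,du=\int\prod_j \mathrm{Bern}(r_j;g_j(x_{-j},\tilde z))\,dP_{\tilde Z}(\tilde z)=p_{gt}(r|x).
\end{equation*}

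The main obstacle is the meaning of "there exist $f_{r_j}$ with parameter $\psi^*$." The composite $g_j\circ T$ is only measurable, so it may not be exactly realizable by a finite neural network. I would state the lemma for the function class, i.e.\ nonparametrically flexible decoders. If continuity is needed, I would use a continuous, a.e.-invertible $T$ such as a Knothe--Rosenblatt map when $P_{\tilde Z}$ has a density. Alternatively, I would note that because $R$ is discrete, only the finitely many mixtures $x\mapsto p_{gt}(r|x)$ must match, and these could be reproduced with a one-dimensional latent partition. A secondary point is handling null sets in the conditional-independence factorization, which holds $P_{X,\tilde Z}$-a.s.; the integral identity is unaffected.
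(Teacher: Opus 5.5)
Your proof is correct, and it takes a genuinely different route from the paper's.

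\textbf{The paper's route.} The paper does not use Assumptions \ref{independence}--\ref{nscass} in this proof. It sets $\kappa_2=1$, maps $\tilde z$ to $u\in[0,1]$ through the normal CDF, and cuts $[0,1]$ into $2^p$ consecutive intervals of lengths $p_{gt}(r'|x)$, ordered by reading $r'$ as a binary number. It then lets $f^{(j)}(x,u)$ be the indicator that $u$ lies in an interval whose label has $r'_j=1$. The product of these degenerate Bernoullis is the indicator of the interval for $r$, so integrating over $u$ returns $p_{gt}(r|x)$. This construction is fully explicit and needs nothing about the mechanism or about the law and dimension of the true $\tilde Z$. It is also what supports the paper's remark that $\kappa_2=1$ suffices.

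\textbf{What it does not give.} The interval endpoints are cumulative sums of $p_{gt}(\cdot|x)$, so each $f^{(j)}$ depends on the full $x$; the paper indeed writes $f^{(j)}_{\psi}(x,u)$. Already for $p=2$, $f^{(2)}$ involves $P_{gt}(R_1=0|x)$, which under Assumption \ref{nscass} generally depends on $x_2$. So the paper's construction does not deliver decoders of the form $f_{r_j}(x_{-j},\tilde z)$ that the lemma asserts.

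\textbf{What your route buys.} You factorize the true mechanism through Assumption \ref{nscass} and then use a measurable transport with $T_\#\mathcal N(0,I_{\kappa_2})=P_{\tilde Z}$. The resulting decoders $g_j(x_{-j},T(u))$ respect the $x_{-j}$ restriction exactly. The cost is a non-constructive (Borel-isomorphism) step and reliance on Assumptions \ref{independence}--\ref{nscass}. Some such assumption is unavoidable for the restricted statement: with $p=1$ and self-censoring, no decoder that ignores $x_1$ can reproduce a $p_{gt}(r|x_1)$ that varies with $x_1$.

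\textbf{On your closing remark.} The ``one-dimensional latent partition'' you offer as an alternative is essentially the paper's construction. Applied to the mixtures $x\mapsto p_{gt}(r|x)$, it would reintroduce dependence of $f_{r_j}$ on $x_j$, so it is not interchangeable with your main argument.
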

}


Lemma \ref{lemma_r} theoretically justifies the capacity of the proposed model to recover the ground-truth missingness mechanism by means of a factorization among the missingness indicators given the data $X$ and the latent variable $\tilde{Z}$. 
Based on the result for the capacity with respect to the decoder in Lemma \ref{lemma_r}, we can show that the underlying true joint distribution of $X_{(R)}$ and $R$ can be well approximated by our latent variable models, as stated in the following theorem.

{
\begin{theorem}\label{final_thm}
Consider the proposed IM-IWAE method with the objective function in Equation (\ref{objective}).
Suppose the ground-truth density $p_{gt}(x)$ is nonzero everywhere on $\mathbb{R}^p$. Assume that $w$ is bounded. Then for any chosen latent dimension $\kappa_1$ and $\kappa_2$ for the method with $\kappa_1 \geq p$ and $\kappa_2 \geq 1$, there is a sequence of decoders and encoders with parameters $\{\theta^*_{t}, \psi^*_{t}, \phi^*_{t}, \lambda^*_{t}\}$ such that 
    \begin{equation*}
{\lim_{t\to\infty}}\lim_{K\to\infty}\mathcal{L}_K (\theta^*_{t}, \psi^*_{t}, \phi^*_{t}, \lambda^*_{t}; x_{(r)}, r) = \log p_{gt}(x_{(r)}, r).
    \end{equation*}
\end{theorem}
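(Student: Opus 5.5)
The plan is to split the statement into an inner limit in $K$ and an outer limit in $t$. For the inner limit, Theorem \ref{LK} gives, for any fixed parameters with bounded $w$, that $\lim_{K\to\infty}\mathcal{L}_K(\theta,\psi,\phi,\lambda;x_{(r)},r)=\log p_{\theta,\psi}(x_{(r)},r)$, whatever the encoders $\phi,\lambda$ are. The encoders therefore only need to make $w$ bounded. A natural choice is to let $q_\phi$ and $q_\lambda$ be Gaussians that do not depend on the input, with variance larger than that of the priors, so that the ratios $p(z)/q_\phi(z)$ and $p(\tilde z)/q_\lambda(\tilde z)$ are bounded. With this choice the claim reduces to constructing $(\theta_t^*,\psi_t^*)$ with
\[
\lim_{t\to\infty}\log p_{\theta_t^*,\psi_t^*}(x_{(r)},r)=\log p_{gt}(x_{(r)},r).
\]

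For the missingness part, I will invoke Lemma \ref{lemma_r} to fix $\psi_t^*\equiv\psi^*$ with $p_{\psi^*}(r|x)=p_{gt}(r|x)$ for all $x$. Then
\[
p_{\theta,\psi^*}(x_{(r)},r)=\int p_\theta(x)\,p_{gt}(r|x)\,dx_{(m)},
\]
and it remains to show $p_{\theta_t}\to p_{gt}$ in a sense strong enough to pass the limit under this integral. For the data part, I will build a Gaussian decoder $\mathcal{N}(f_{\mu_x}(z;\theta_t),\gamma_t I)$ with $\kappa_1\ge p$ as follows. Since $p_{gt}$ is nowhere vanishing on $\mathbb{R}^p$, there is a diffeomorphic transport map $T$ (for example a Knothe--Rosenblatt or Brenier map) pushing $\mathcal{N}(0,I_p)$ forward to $p_{gt}$. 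Set $f_{\mu_x}(z)=T(z_{1:p})$, ignoring any extra coordinates when $\kappa_1>p$, approximate $T$ by neural networks $f_t$, and let $\gamma_t\to0$. Then $p_{\theta_t}=(f_t)_\#\mathcal{N}(0,I)*\mathcal{N}(0,\gamma_t I)$ converges to $p_{gt}$; this is the same mollification argument that underlies Theorem \ref{universal}(b) with $d=p$, and I can cite it.

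The main obstacle is upgrading this convergence to convergence of the marginal $\int p_{\theta_t}(x)\,p_{gt}(r|x)\,dx_{(m)}$ at the given point $x_{(r)}$. Weak convergence or convergence in $L^1$ is not pointwise. To handle this I would first reduce to the case where $p_{gt}$ is continuous. The mollified density $\int\varphi_{\gamma_t}(x-T(z))\,\phi(z)\,dz$ equals $(p_{gt}*\varphi_{\gamma_t})(x)$ exactly when $f_t=T$, and this converges pointwise to $p_{gt}(x)$ at every continuity point. The network error in $f_t$ can then be chosen to vanish faster than $\gamma_t$, using uniform approximation of $T$ on large compacts together with tail control. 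The integral over $x_{(m)}$ follows by dominated convergence, since $p_{gt}(r|x)\le1$, combined with Scheffé's lemma for the conditional densities.

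Finally, $p_{gt}(x_{(r)},r)>0$ holds by positivity of $p_{gt}$ (and Assumption \ref{psass} where needed), so continuity of $\log$ gives the stated convergence. It also remains to check that boundedness of $w$ persists along the sequence. The factor $p_\theta(x_{(r)}|z)\le(2\pi\gamma_t)^{-|r|/2}$ is bounded for each fixed $t$, which is all that is required, since the inner limit is taken first.
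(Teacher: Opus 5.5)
Your proof is correct in outline, and it takes a different and more careful route than the paper. The paper works pattern by pattern. It builds a Knothe--Rosenblatt decoder $F^{-1}\circ G$ from $z_{(r)}\in\mathbb{R}^{p_{(r)}}$ to the observed block only, with $\gamma_t=1/t$, and shows $p_{\theta_t^*}(x_{(r)})\to p_{gt}(x_{(r)})$ by the Dirac-delta argument. It then multiplies by the conclusion of Lemma \ref{lemma_r}, written as $p_{\psi^*}(r|x_{(r)})=p_{gt}(r|x_{(r)})$, and applies Theorem \ref{LK} under the assumed boundedness of $w$. That combination step is not justified as written. In the working model, $p_{\theta,\psi}(r|x_{(r)})=\int p_\theta(x_{(m)}|x_{(r)})\,p_\psi(r|x)\,dx_{(m)}$ depends on $\theta$ through the law of the missing block, and a decoder defined on $x_{(r)}$ alone does not specify that law.

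You avoid this in several ways:
\begin{itemize}
\item You use one transport map on all of $\mathbb{R}^p$, which is where $\kappa_1\ge p$ actually enters.
\item You fix $\psi^*$ for all $x$ and pass the limit through $\int p_{\theta_t}(x)\,p_{gt}(r|x)\,dx_{(m)}$ by Scheff\'e's lemma, using $0\le p_{gt}(r|x)\le 1$.
\item You exhibit encoders (input-independent Gaussians wider than the prior) that make $w$ bounded for each $t$, instead of assuming it.
\item You address network approximation of the transport map, which the paper sidesteps by taking the decoder mean to be exactly $F^{-1}\circ G$.
\end{itemize}
The paper's argument is shorter, but it only establishes convergence of the observed-block marginal. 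Yours gives a single model valid for all patterns and genuinely controls the missing-data integral.

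A few points to tighten:
\begin{itemize}
\item Scheff\'e's lemma needs $\int g_t\,dx_{(m)}\to\int g\,dx_{(m)}$ for $g_t=(p_{gt}*\varphi_{\gamma_t})(x_{(r)},\cdot)$. That is convergence of the smoothed marginal of $X_{(r)}$ at $x_{(r)}$, which is exactly the paper's computation. So you need continuity of that marginal at $x_{(r)}$, not only of the joint density. Your ``reduction to continuous $p_{gt}$'' is therefore an added regularity assumption, tacit in the paper as well; without it, the statement holds only for a.e.\ $x_{(r)}$ via Lebesgue points.
\item You do not need $T$ to be a diffeomorphism, which positivity alone does not provide. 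The Knothe--Rosenblatt map suffices.
\item The network error must beat the Lipschitz constant of the Gaussian kernel, of order $\gamma_t^{-(p+1)/2}$, not just $\gamma_t$.
\item Positivity of $p_{gt}(x)$ together with Assumption \ref{psass} gives $p_{gt}(x_{(r)},r)>0$ only for $r=\bm{1}$. For other patterns, either assume it or read the limit in $[-\infty,\infty)$.
\end{itemize}
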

}
{\color{black}We note that Theorem \ref{final_thm}, our central result on recovering data distributions with the working models, holds independently of the manifold assumption in Theorem \ref{universal}, as Theorem \ref{universal} only serves to motivate the use of latent variable models.}
Theorem \ref{final_thm} suggests that, given a sufficiently large $K$, we can approximate the ground-truth joint distribution of the observed data and the missingness indicators 
through the parameterized objective function,
regardless of conditions on the missingness mechanism. Theorem \ref{mechanism} (identification) and Theorem \ref{final_thm} (recovery of the true distribution $p_{gt}(x_{(r)}, r)$) together ensure the recovery of the true full-data distribution $p_{gt}(x,r)$ including the missing part, which is the target of our method, as the identification specifies a one-to-one mapping between the joint distribution $p_{gt}(x_{(r)}, r)$ of the observed part and the full-data distribution $p_{gt}(x,r)$.

}

\section{Simulations}\label{simulationstudy}
In this section, we provide a series of simulation studies to evaluate our approach. 
Our evaluation encompasses various aspects, including data imputation, data generation, and estimation for a parameter of interest. 
To benchmark the effectiveness of our proposed methods, we compare them with two contemporary deep generative models for MNAR data,  \textcolor{black}{GINA} \citep{ma2021identifiable} and \textcolor{black}{not-MIWAE} \citep{ipsen2020not}. Furthermore, we include two classical imputation methods, \textcolor{black}{MICE} \citep{van2011mice} and \textcolor{black}{missForest} \citep{stekhoven2012missforest}, which are based on the commonly used MAR assumptions, as our baseline references. We adopt the same neural network structures and hyperparameter settings for methods based on generative models.


\subsection{Imputation and generation}\label{sim3d}
We first focus on 3-dimensional missing data scenarios with no self-censoring, and then consider more complex cases later in this subsection.


\noindent{\bf Generation of complete data.} The data generation process begins with latent variables $Z_1, Z_2, Z_3 \overset{\text{i.i.d}}{\sim} \mathcal{N}(0,1).$ Subsequently, we generate each data variable as 
\begin{equation*}
    X_j = f_{\theta_j}(Z_1,Z_2,Z_3)+\epsilon_j, 
\end{equation*}
for $j = 1,2,3$, where $f_{\theta_j}$ represents \textcolor{black}{a nonlinear mapping with one hidden layer coupled with the $tanh$ function and randomly generated coefficients}, and $\theta_j $ denotes distinct parameter sets for $f_{\theta_j}$. And $\epsilon_1, \epsilon_2, \epsilon_3 \overset{\text{i.i.d}}{\sim} \mathcal{N}(0,0.01)$ serve as noise variables. 
%
For the missingness mechanism, we simulate missingness under two different strategies.

\noindent{\bf Simulation of missingness with latent variables.} For the first strategy, we adopt the scheme 
\begin{equation*}
    R_j \sim \text{Bernoulli} \big(h_{\psi_j}(X_{-j}, \tilde{Z})\big), ~~~~ 1 \le j \le 3,
\end{equation*}
 involving a latent variable $\tilde{Z} \sim \mathcal{N}(0,1) $ and a mapping function $h_{\psi_j}$ \textcolor{black}{for which we experiment with both linear and nonlinear structures, where the coefficients are randomly chosen, and the $tanh$ function is used for the nonlinear case}.  Note that in this scenario, all variables $X_1, X_2, X_3$ could be missing and that auxiliary variables are unavailable. 


\noindent{\bf Simulation of missingness without latent variables.}   We also examine a more general case, where we simulate missingness according to the no self-censoring condition without resorting to a latent variable. We establish a dependency of the missingness indicators on the other data variables while preserving the correlation between these indicators, as in the right panel of Figure \ref{fig:nsc}. This is achieved using an independent noise variable according to the noise-outsourcing lemma in probability theory \citep{kallenberg1997foundations, austin2015exchangeable}. Specifically, when generating missingness, for each dimension $j\in \{1,2,3\}$ of the missingness indicators, we sample from a uniform distribution $U \sim \mathcal{U}(0,1)$ and apply a transformation $g_{\psi_j}(X_{-j})$ with randomized parameter $\psi_j$, where 
$g_{\psi_j}$ is a linear function or a nonlinear function with \emph{tanh} activations.
We let $R_j=1$ if $U>g_{\psi_j}(X_{-j})$, and $R_j=0$ otherwise.


In each missingness strategy, we simulate a total of $20,000$ random samples in each replication, with a total of $50$ replications conducted. The simulated missing rates are controlled to fall within the range of $30\%$ to $40\%$ \textcolor{black}{by adding a constant to the logits of Bernoulli distribution for the missingness mechanism}. 
Samples in which all variables are missing are excluded from the training set.
%

\sisetup{detect-weight,mode=text}
\renewrobustcmd{\bfseries}{\fontseries{b}\selectfont}
\renewrobustcmd{\boldmath}{}
\newrobustcmd{\B}{\bfseries}
\def\code#1{\texttt{#1}}
\addtolength{\tabcolsep}{-4.1pt}

\begin{table}[t]

\begin{center}
\caption{Imputation RMSE of simulated missing data with no self-censoring}
\begin{tabular}{l|cc|cc}
\hline
\multirow{2}{*}{Method} & \multicolumn{2}{c|}{With latent variables}& \multicolumn{2}{c}{Without latent variables}\\
& linear & nonlinear& linear & nonlinear \\ \hline
Proposed &  \B 0.7081 (0.1146) &  \B 0.7145 (0.1325)&\B 0.7386 (0.1687) &  \B 0.7188 (0.1186) \\
GINA &  0.8682 (0.2309) & 0.8676  (0.1324)& 0.9957(0.3484) & 0.8650(0.1062)  \\
Not-MIWAE & 0.9311(0.6139)  & 0.7570(0.1619)& 0.8877(0.3558)  & 0.7423(0.1230) \\
MICE & 1.2544(0.3144) & 1.0498(0.2255) &1.1998(0.2794) & 1.0772 (0.2370) \\
MissForest & 0.9718(0.2198)& 0.9507(0.1400) & 0.9998(0.1694) & 0.9506(0.1208) \\
\hline
\end{tabular}%
\label{3dsim}
\end{center}
\vspace{-20pt}
\end{table}



\noindent{\bf Results.}
We evaluate the performance of each method under each setting via
the imputation root mean squared error (RMSE) which is provided in 
Table \ref{3dsim}.
In the first setting where latent variables are used for the missingness mechanism, there is no model misspecification for our method. 
As shown on the left side of Table \ref{3dsim}, the proposed method significantly outperforms GINA which also incorporates latent variables for the missingness indicators yet with a modeling of self-censoring. Not-MIWAE shows a less stable performance in terms of both mean and standard deviation. This instability is likely due to 
its inability to account for the correlation among missingness indicators and the exclusion of self-censoring. It is noteworthy that neither of GINA and not-MIWAE  guarantees model identifiability under this MNAR scenario, which may have contributed to their inferior performance in our experiments.  \textcolor{black}{The three deep generative models designed for MNAR data clearly outperform the classical methods (MICE and MissForest) based on MAR assumptions in terms of lower imputation RMSE.}


\textcolor{black}{In the setting where missingness is generated without  latent variables, as shown on the right side of Table \ref{3dsim}, our imputation errors increase slightly compared to the previous case probably due to the model misspecification of the missingness mechanism. However, our method still outperforms other methods overall. In contrast, GINA exhibits deteriorated imputation performance similar to the baseline methods for MAR data, and increased variance, in the linear case, which could be possibly due to its nonidentifiability issue that our model has successfully addressed.} These results demonstrate advantages of our proposed method over existing methods and certain robustness in terms of missing value imputation in different scenarios of no self-censoring.

\begin{figure}[t]
    \centering
    \includegraphics[scale = 0.6]{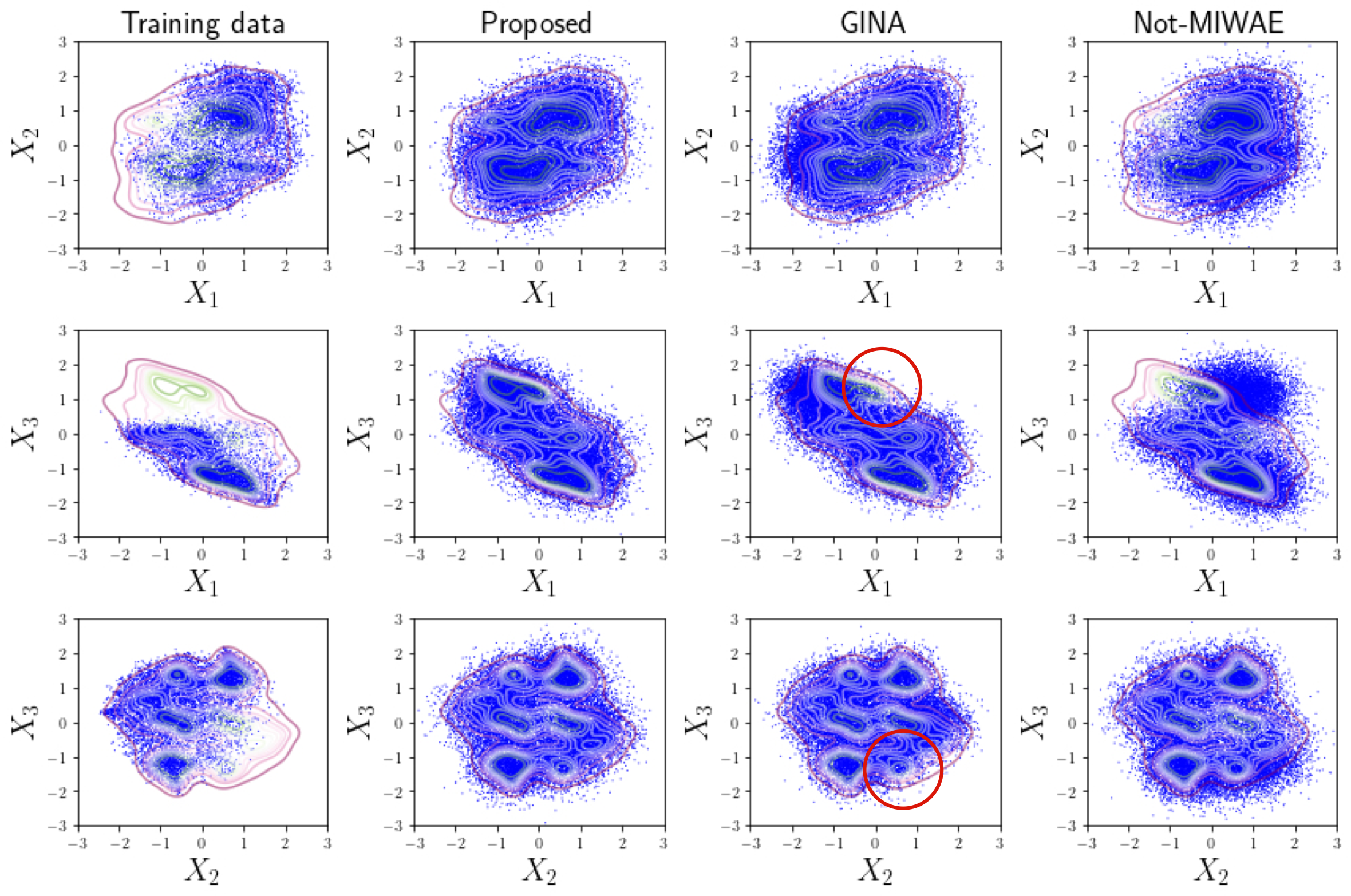}
    \caption{Visualization of generated data from the deep generative models under the setting with latent variables and a linear transformation $g_{\psi_j}$. 
}
    \label{fig:plot2d}
    \vspace{-20pt}
\end{figure}

\noindent {\bf Visualization.}
We plot the generated data distributions of the three deep generative models in Figure \ref{fig:plot2d}, and compare them with
the original complete data and the training data with missing values.
%
%
 We examine the data in all 2-dimensional combinations of the 3-dimensional data. 
 {The contour lines in Figure \ref{fig:plot2d} represent the density of the true data distribution, where pink represents lower values,
white ones represent middle values,
and green ones represent higher values. The blue dots in Figure \ref{fig:plot2d} represent samples in the training data or from the imputation based on deep generative models.} 

Due to MNAR, samples in the training data are a biased representation of the true data distribution. Although all three generative models have made a clear effort to recover the original distribution by including a missingness model, not-MIWAE produced a substantial bias. The performances of the proposed model and GINA do not appear to differ much, but a closer examination of the areas in the red circles in Figure \ref{fig:plot2d} reveals that GINA tends to miss some of the modes in the data distribution, while the proposed model captured those more closely. In addition, calculations of the maximum mean discrepancy (MMD) against the true data distribution over $50$ replications show that the proposed model has a mean $0.0086$ with a standard deviation $0.0074$, while GINA has a higher mean of $0.0098$ with a larger standard deviation $0.0102$, again demonstrating the superior learning and generative performance of our model in this scenario. 

\begin{figure}[tbh!]
    \centering
    \includegraphics[scale = 0.6]{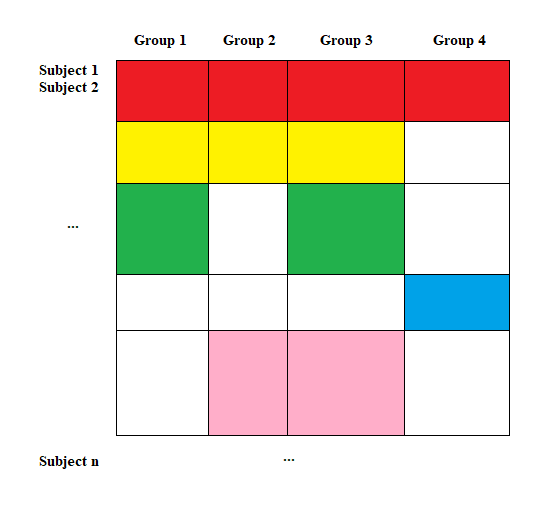}
    \caption{An illustration of block-wise missingness with $4$ groups for $n$ subjects. Each group is a set of variables either missing together or observed together. Each row represents a subject, that is, a sample of $X$. Here white areas represent blocks of missing values. 
    Different colors represent different missing patterns. Note that subjects are arranged in an order such that they have the same missing pattern. For the sake of brevity, we present only a subset of the missing patterns, omitting the others. (Note that the group sizes are not required to be the same, where the group size means the number of variables within a group.The block-wise missingness with $4$ groups correspond to the case with $p=100$ and $gs=25$, the last column in Table \ref{highdimsim} for the experiments.) 
    } 
    \label{fig:block}
    \vspace{-20pt}
\end{figure}

\noindent {\bf Higher dimensions}
The low-dimensional simulation enabled us to closely examine the performance of the method, both numerically and visually, in comparison with other methods across different scenarios. Applying similar settings to higher dimensions provides valuable insights into the model's performance in more challenging situations.

For higher dimensions, we generated $X$ from latent variables, experimenting with data dimensions of \( p = 30, 50, 100 \) while keeping the latent dimension fixed at 10. All other parameters and settings were consistent with the previous $3$-dimensional experiments, with no latent variable used for generating missingness and with the mapping function $h$ being linear. The missing rates were controlled to be around $50\%$. In these higher-dimensional simulations, we considered different settings with block-wise missingness, where variables within the same group are either missing together or observed together. See Figure \ref{fig:block} for an illustration of a dataset with $3$ groups. Specifically, we varied the sizes of the groups. The term ``no self-censoring'' in the context of block-wise missingness refers to the scenario where the missingness of the variables in a group does not depend on any of the variables within that group. The imputation results are presented in Table \ref{highdimsim}. Due to the prohibitive computational time of the MissForest method, we excluded it from these higher-dimensional simulations. However, we retained MICE as a baseline for comparison, given its ignorability assumption in contrast to other methods.

The proposed method demonstrated a clear advantage in these scenarios. The improvement of the proposed method became more pronounced as the group size increased. It is notable that with the increasing group sizes  (and consequently the decreasing number of groups) for a given number of variables, the issue of nonidentifiability is magnified in the other imputation methods assuming MNAR. Under these conditions, GINA and Not-MIWAE showed significantly deteriorating performances, whereas the proposed method exhibited relative stability. The imputation performance of MICE remained fairly stable, but its overall imputation RMSE was noticeably worse compared to the proposed method.

\begin{table}[hbt!]

\begin{center}

\caption{Imputation RMSE of simulated data with higher dimensions and $10$ latent variables, where $p$ is the number of variables in $X$ and $gs$ is the group size of the variables, i.e., the number of variables within a group.}
\vspace{3mm}
\resizebox{\columnwidth}{!}{%
\begin{tabular}{l|ccc|ccc|ccc}
\hline
\multirow{2}{*}{Method} & \multicolumn{3}{c|}{$p=30$}& \multicolumn{3}{c|}{$p=50$}&\multicolumn{3}{c}{$p=100$}\\
& $gs =2$ & $gs =3$ & $gs =5$&  $gs =2$ & $gs =5$ & $gs =10$ &  $gs =5$ & $gs =10$ & $gs =25$\\ \hline
Proposed & \B 0.61(0.02) &\B 0.64(0.02) &\B 0.74(0.16)   &  \B 0.47(0.01)  & \B 0.55(0.02)& \B 0.70(0.17) & \B 0.34(0.01) & \B0.38(0.02)& \B0.50(0.18)\\
GINA &0.70(0.04)& 0.86(0.15) &1.37(0.31) & 0.50(0.02) &  0.88(0.15) & 1.15(0.27)& 0.42(0.06) &0.60(0.09)&0.90(0.30)\\
Not-MIWAE &0.65(0.06)& 0.99(0.25)&1.79(0.31)   & 0.47(0.02) & 1.19(0.16)& 1.50(0.20) & 0.46(0.08) &0.74(0.13) &1.00(0.24)\\
MICE & 0.83(0.03) & 0.83(0.03) & 0.82(0.03) & 0.74(0.03)  & 0.75(0.02) & 0.77(0.02)& 0.60(0.01) & 0.69(0.02)&0.90(0.01)\\
\hline
\end{tabular}%
}
\label{highdimsim}
\end{center}
\vspace{-20pt}
\end{table}
\subsection{Mean estimation for mixture of Gaussian distributions} \label{mixNsim}
In this subsection, we evaluate our method from an alternative perspective, focusing on the estimation of a mean parameter. This allows us to put our model in the context of conventional statistical techniques (such as parameter estimation), as one of our goals is to \textcolor{black}{harness the strengths of these traditional methods, backed by well-established theory}.

\begin{figure}[hbt!]
    \centering
    \includegraphics[scale = 0.4]{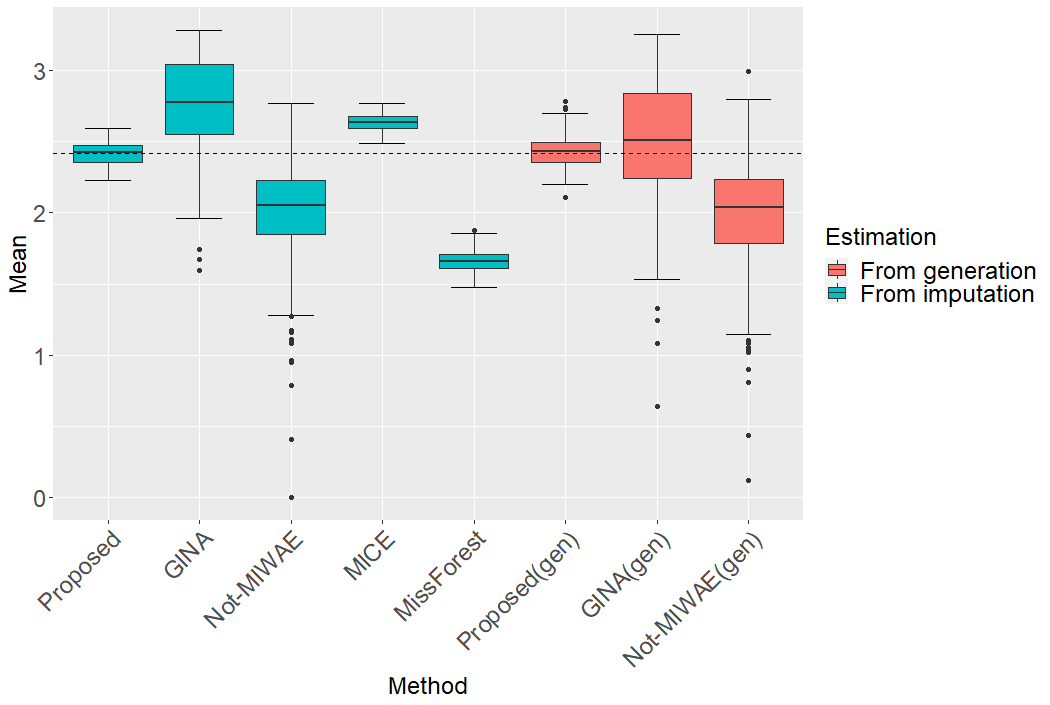}
    \caption{Estimation of $\mathbb{E}(X_3)$ in the Gaussian mixtures. The dotted line represents the true value of $\mathbb{E}(X_3)$.} 
    \label{fig:boxplot}
    \vspace{-10pt}
\end{figure}

We follow the data generation framework in \citet{malinsky2021semiparametric}, which involves a 3-dimensional mixture of Gaussian distributions with missingness introduced into the data via a conditional Gaussian graph model that complies with the no self-censoring assumption. \textcolor{black}{The missingness patterns $R = r$ are sampled from a multinomial distribution, while the data follow distinct Gaussian distributions based on varying missingness patterns, i.e. $X|R = r \sim \mathcal{N}(\mu_0(r), \Sigma_0)$. By adjusting certain interaction terms or mean parameters to zero, the no self-censoring condition can be satisfied in this parametric data generation process} \cite[p.119--120]{hojsgaard2012graphical}. The parameter values for the simulation study are detailed in the supplementary material.

In this setting,  
all variables $X = (X_1, X_2, X_3)$ could potentially be missing. Our aim is to estimate the mean of a specific variable, that is, $X_3$. We performed $100$ replications with sample size $20000$ and provided boxplots of estimated mean values of all replications and methods in Figure \ref{fig:boxplot} to illustrate the results,
where green boxes represent estimates based on observed values and imputed values, and red boxes represent estimates based on generated values only.

In particular, the estimates derived from our proposed method, whether based on imputed values or generated data, exhibit a substantial convergence with respect to the actual value represented by the dotted line in Figure \ref{fig:boxplot}. This signifies a successful recovery of the distribution, in contrast to the estimates obtained from other methods, which demonstrate biases. Our performance in mean estimation is also competitive in terms of accuracy with the semiparametric inference method proposed by \citet{malinsky2021semiparametric} for the same problem. Furthermore, the estimates generated by our approach display a limited range of variation, similar to the results of two well-established classical imputation methods operating under the inherently identifiable MAR assumption. We suspect that the large variation of the other deep generative models is due to their lack of identifiability.

\section{Real data analysis}\label{realexp}

\subsection{UCI data with synthetic missingness}

To assess the efficacy of our method in more realistic data distributions, we perform a series of experiments focusing on missing value imputation utilizing datasets from the widely recognized UCI machine learning repository \citep{asuncion2007uci}. These experiments encompass six different datasets, including banknote authentication, red wine quality, white wine quality, concrete compressive strength, yeast, and waveform database generator. These datasets 
%
%
do not contain inherent missing values. We introduce synthetic MNAR missingness 
to evaluate the performance of imputation. Since we know the truth, this allows us to compare our proposed method with other baseline techniques and state-of-the-art methods. We use the imputation RMSE to evaluate each method. 
{Considering the fact that real-world missing data with MNAR mechanisms could be different from no self-censoring, we conduct experiments using simulated MNAR mechanism which is different from no self-censoring for the investigation of robustness. In this setting, each missing indicator could potentially depend on all variables in the data.} This generation of missingness closely resembles the one in our experiments of linear no self-censoring without latent variables on synthetic data in Section \ref{sim3d}, with the exception that the self-censoring aspect is also included in the missingness mechanisms.

Specifically, we sample $U \sim \mathcal{U}(0,1)$ and
apply a linear transformation $g_{\psi_j}(X)$ on $X$, for each dimension $j$ of missingness indicators $R$, with randomized parameters $\psi_j$. 
We let $R_j=1$ if $U>g_{\psi_j}(X)$, and $R_j=0$ otherwise.
%
\begin{figure}[tbh!]
    \centering
    \includegraphics[width = \textwidth]{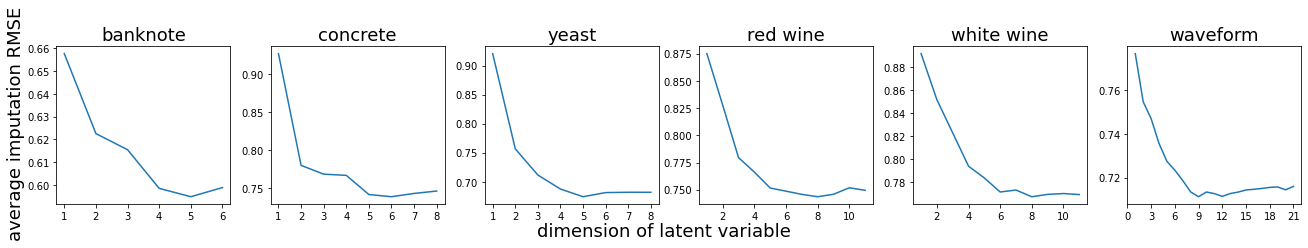}
    \caption{Selection of latent dimensions for UCI datasets with MNAR missingness. The curves represent the average imputation RMSE across each validation set during 5-fold cross-validation, where the entries to be imputed are MCAR in the validation sets.}
    \label{fig:cv}
    \vspace{-20pt}
\end{figure}

The missing rates in our experiments range from $30\%$ to $50\%$, and no variables are designed to always be observed. It is noteworthy that, unlike in simulations where latent dimensions are known, the latent dimensions in real-world data are not as clear. Consequently, we select the latent dimensions for each dataset in the presence of missingness using cross-validation, as described in Section \ref{cld}. The process of selecting the data latent dimensions involves imputing the manually MCAR missing entries in the validation sets after training the model on the training sets. Figure \ref{fig:cv} illustrates the average imputation RMSE on 5-fold cross-validated missing data with different latent dimensions for each dataset. The optimal latent dimensions were determined using the elbow method, identifying the point where the rate of improvement for imputation sharply decreases.

\begin{table}[tbh!]
\begin{center}
\caption{Imputation RMSE on  UCI Data with MNAR. The default missingness model in the generative models is linear; (nl) denotes a nonlinear missingness model and (s) represents self-censoring.}
\resizebox{\columnwidth}{!}{%
\begin{tabular}{lrrrrrr}
\hline
Method & \multicolumn{1}{c}{banknote} & \multicolumn{1}{c}{concrete}& \multicolumn{1}{c}{yeast} &  \multicolumn{1}{c}{red wine} & 
\multicolumn{1}{c}{white wine}  & \multicolumn{1}{c}{waveform}\\ \hline
Proposed & \B 0.713(0.118) & \B 0.745(0.070)&\B 0.966(0.134)&\B 0.762(0.059)&\B 0.827(0.054)&\B 0.722(0.011)\\
Proposed(nl)& \B 0.715(0.129)& \B 0.748(0.064)&\B 0.965(0.132)&\B0.787(0.055)&\B 0.840(0.056)&\B 0.730(0.013)\\
GINA & 0.843(0.106)& 0.884(0.055)&1.062(0.120)&0.877(0.051)&0.921(0.049)& 0.786(0.007)\\
GINA(nl) &0.880(0.119)& 0.926(0.056) &1.110(0.083)& 0.915(0.051)&0.970(0.041)& 0.799(0.010)\\
Not-MIWAE & 0.772(0.209)& 0.825(0.077)&1.115(0.271)&0.853(0.079)&0.870(0.059)& 0.736(0.011)\\
Not-MIWAE(nl) & 0.915(0.150)& 0.841(0.068)&1.029(0.087)&0.879(0.055)&0.901(0.041)&0.770(0.015)\\
Not-MIWAE(s) & 1.693(0.672)& 1.619(0.539)&2.225(0.710)&2.222(0.626)&2.493(0.685)&1.889(0.642)\\
MICE &1.045(0.155)& 1.054(0.061)&3.030(1.585)&1.050(0.051)&1.078(0.050)& 0.991(0.032)\\
MissForest &0.940(0.155)& 1.020(0.071)&1.534(0.289)&1.013(0.071)&1.053(0.067)&0.891(0.022)\\
\hline
\end{tabular}%
}
\label{uci_mnar}
\end{center}
\vspace{-20pt}
\end{table}
We conduct $50$ replications for simulating missingness, and explore the influence of linear and non-linear structures on deep generative models in this setting. The results in Table \ref{uci_mnar} show that
our method outperforms both other deep generative models and classical imputation baselines, even when other deep generative models employ a missingness model that matches the missingness generation in this setting{, which is likely due to their nonidentifiability}. This could {also} be attributed to the fact that the assumption of no self-censoring only excludes one variable for each dimension in the missingness mechanism, which contributes to its strong performance. The good performance may also rely on the proposed method's robustness to self-censoring, as mentioned in the supplementary material, which may result from the correlations between data variables. In contrast, Not-MIWAE(s) perform poorly in this case, as the model only considers self-censoring and ignores all other parts in the missingness mechanism, which deviates from the true model by a large amount. \textcolor{black}{Furthermore, from the comparisons of linear and nonlinear missingness models for the generative models, it is evident that the model complexity has little influence on the proposed method, which produces the best performance using either missingness model, while using the nonlinear model for GINA and Not-MIWAE yield more apparent variation in performance.} 

These experiments illustrate that our approach could prove beneficial in situations where the data is believed to exhibit MNAR patterns, but the exact missingness mechanisms are uncertain or not well-defined.
 Additionally, the finding that there is minimal discrepancy between the results obtained with linear and nonlinear choices for the missingness model in our method demonstrates that it could effectively learn the distribution, even when employing an overparameterized nonlinear structure, aligning with our theoretical findings regarding identification.

\subsection{HIV positive mothers in Botswana}
\begin{table}[t!]
\begin{center}
\caption{Estimated joint distribution of HAART continuation during pregnancy (H), low
CD$4^+$ count (C), and preterm delivery (P) in HIV-infected women in Botswana.}
\begin{tabular}{l||c|c|c|c|c|c|c|c}
\hline\hline
  P, C, H &  1, 1, 1 & 1, 1, 0 & 1, 0, 0 & 1, 0, 1 & 0, 1, 1 & 0, 1, 0 & 0, 0, 0 & 0, 0, 1\\ \hline\hline
  Proposed & 0.0188&0.0616&0.5030&0.1582&0.0112&0.0380&0.1604&0.0487\\
  AIPW & 0.0122 & 0.0268 & 0.5206 & 0.1722 & 0.0091 & 0.0542 & 0.1618 & 0.0430\\
  MICE & 0.0164 & 0.0355 & 0.5156 & 0.1705 & 0.0140 & 0.0483 & 0.1641 & 0.0357\\
  CC & 0.0137 & 0.0342 & 0.3320 & 0.0979 & 0.0333 & 0.1802 & 0.2380 & 0.0705\\
\hline\hline
\end{tabular}%
\label{t:one}
\end{center}
\vspace{-20pt}
\end{table}

A well-known dataset for the study of MNAR data is the data of HIV positive mothers in Botswana. In this dataset, the goal is to estimate the association between highly active antiretroviral therapy (HAART) during pregnancy and preterm delivery. The complete data consists of $33,148$ obstetric records from 6 cites in Botswana \citep{chen2012highly}. We follow the practice of \cite{tchetgen2018discrete, sun2018semiparametric, shpitser2016consistent, malinsky2021semiparametric} by focusing on HIV-positive
women ($n = 9711$) and $3$ binary variables: HAART exposure during pregnancy ($68.9\%$ missing), an indicator of low CD$4^+$count  (less than $200\mu L$), and preterm delivery ($6.7\%$ missing). There is a small portion of complete cases ($10.5\%$). The missingness mechanism is suspected to be MNAR \citep{shpitser2016consistent, malinsky2021semiparametric}.

We learn the distribution of the data using our method with the Bernoulli distribution for the data generative model. As the first step in estimating the relationship of interest, we obtained the joint distribution using our generative model, and compared the results with \textcolor{black}{other estimation methods} including complete case analysis (CC), MICE, and estimation using the augmented inverse probability weighted (AIPW) by \citet{malinsky2021semiparametric} which assumes a no self-censoring model. We can see from the joint distribution in Table \ref{t:one} that certain probabilities are similar for the proposed method, AIPW and MICE, \textcolor{black}{which all have substantial differences from those for the complete case analysis.}

Based on the estimated joint distribution by the proposed method, we estimate the odds ratios between HAART and preterm
delivery at different levels of CD4$^+$ count:  $1.03~ (0.53, 2.94)$ for low CD4$^+$ count and $1.04~ (0.73, 1.53)$ for moderate or high CD4$^+$ count, respectively. Without conditioning on CD4$^+$, the odds ratio is computed as $1.04~ (0.76, 1.48)$. We also estimate the effect of low CD4$^+$ on the result of delivery and obtain the odds ratio as $0.52~(0.36, 0.81)$.  The numbers in parentheses represent the 95\% confidence intervals and are computed by bootstrap (percentile) over 1000 subsamples. 
%
%
The method of AIPW found a significantly positive relationship conditional on low CD4$^+$. Our method does not find a significant result that appears similar to the result of \cite{shpitser2016consistent} with an estimation based on the pseudo-likelihood approach using a no self-censoring model.
It has been suspected that the data is sufficiently noisy to find insignificant results \citep{shpitser2016consistent}.

\subsection{Yahoo! R3 music ratings}
 In this subsection, we apply our method to the Yahoo! R3 dataset \citep{marlin2009collaborative, wang2018deconfounded}. This dataset is primarily designed for the analysis of user-song ratings, with a specific focus on evaluating MNAR imputation techniques. Within this dataset, we have a training set that consists of more than 300,000 self-selected ratings on a scale from $1$ to $5$ provided by 15,400 users for a selection of 1,000 songs, which has an entry-wise missing rate of around $98.0\%$. In addition, there is an MCAR test set, which comprises ratings from 5,400 users across a set of 10 songs chosen at random. 
 We train our model on the training set and subsequently evaluate the imputation performance on the test set. 
 \textcolor{black}{We acknowledge that this dataset likely involves self-censoring. Our aim here is not to assume otherwise, but to demonstrate the robustness and practical effectiveness of our method on a complex real-world dataset.}  

  \begin{figure}[t]
    \centering
    \includegraphics[width=1\linewidth]{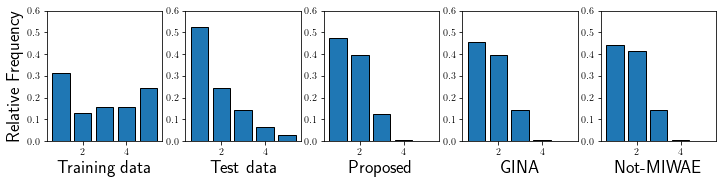}
    \caption{Bar graphs over rating values for the Yahoo! R3 dataset. The first two are from the MNAR training set and the MCAR test set. The last three show the data values of the imputed test set using different methods.}
    \label{fig:yahoo_bar}
    \vspace{-20pt}
\end{figure}
 We compare the proposed method with 
 GINA \citep{ma2021identifiable} and not-MIWAE \citep{ipsen2020not}, which are also designed for MNAR but did not address the issue of identifiability without resorting to auxiliary variables. \textcolor{black}{We model the missingness mechanism as a linear function in all methods, allowing each $R_j$ to depend on all dimensions of $X$ in the baselines.} 
 We did not include classical methods such as MICE or missForest due to their prohibitively long training time. 
  We visualize the imputation results in the five categorical bar graphs, together with the training and test data. First, we can notice that the training data have a very different distribution from the test data, where the latter is known to be a relatively unbiased representation of the true data distribution, while the training data are MNAR and thus biased. More importantly, we find that all the generative models learn a data distribution closer to the test data than to the training data, where the proposed model produces the most pronounced difference between ratings $1$ and $2$ among all the methods, \textcolor{black}{relatively closest to the distribution of the unbiased test data}.

 Since results of deep generative methods depend on choices of seeds, we re-run each deep generative method 
 for $50$ times using different seeds, yielding the results of mean and standard deviation of imputation RMSE for the different methods:  $1.1807 
 (0.0073)$, $1.1870 (0.0089)$, $1.1818 (0.0062)$ for the proposed model, GINA and not-MIWAE, respectively. 
 \textcolor{black}{Consistent with the bar graphs, our proposed method shows slightly better performance than the alternatives, though the differences are not substantial. This is expected, since in high-dimensional settings correlations across many variables may dilute the impact of self-censoring, making the no self-censoring assumption relatively mild.} 

\section{Conclusion}\label{conclusion}
In this paper, we propose a new deep latent variable model for MNAR data under a conditional no self-censoring assumption.
We have established theoretical identifiability under mild conditions. This theoretical foundation forms the basis for the novel practical algorithm IM-IWAE using importance-weighted autoencoders for data generation and imputation of MNAR data, with the theoretical justification of its ability to recover ground-truth data distributions. Our comprehensive experiments, which include simulation studies and real-world applications, effectively highlight the advantages and robustness of the proposed method. The experiments have also demonstrated that our method could be potentially useful when data is suspected to be MNAR but the specific missingness mechanisms are agnostic or unclear. 

\textcolor{black}{Some future directions include but not limited to \textcolor{black}{extending identifiable deep latent variable models for MNAR data to more general topological spaces,} enhancing the imputation performance of the proposed method in data with extremely high missing rates,
and investigating other reasonable and practical identification conditions for MNAR data to incorporate them into deep generative models.}







\appendix
 \section{Proofs of Theoretical Results}
\subsection{Proof of Theorem \ref{universal}}
\begin{proof} 
We focus on the missingness mechanism first, i.e., the conditional distribution of $R$ given $X$. 
\subsubsection{The approximation of the missingness mechanism}\label{appx_mech}
Without loss of generality, let $\tilde{Z} \sim \mathcal{N}(0,1)$, i.e. $q_2 = 1$.  (In the case of $q_2 >1$, it is trivial to show that the last $q_2 -1$ dimensions can be ignored in the conditional distribution of $R$ given $\tilde{Z}$ and $X$.) 
Then we have 
\begin{equation*}
    d H(\tilde{z}) = p(\tilde{z})d\tilde{z} = \mathcal{N}(\tilde{z}|0,1) d \tilde{z}.
\end{equation*}
Now, denote the ground-truth missingness mechanism as
\begin{equation*}
\mathbb{P}_{gt}(R = r|x) = p_r,
\end{equation*}
with $\sum_{r \in \{0,1\}^p} p_r = 1$.
With a little abuse of notation, we write below $r < r'$ by treating $r$, $r'$ as binary numbers. For example, if $r = (0,0,1)$ and $r' = (0,1,0)$, then write $r < r'.$
By using indicator functions $\bm{1}(\cdot)$, we define a set of deterministic functions 
\begin{equation*}
    f(x,u) = (f^{(1)}(x,u), \cdots, f^{(p)}(x,u))
\end{equation*}
for $x \in \mathbb{R}^p$ and $u \in [0,1]$ as 
\begin{multline*}
     f^{(1)}(x,u) =  \bm{1}\{\sum_{r' < (1,0...0)}p_{r'} < u \leq 1\}^{r_1} \bm{1}\{0 \leq u \leq \sum_{r' < (1,0...0)}p_{r'} \}^{1-r_1},
\end{multline*}
\begin{multline*}
    f^{(2)}(x,u) = \bm{1}\{\sum_{r' < (0, 1,0...0)}p_{r'} < u \leq \sum_{r' < (1, 0,0...0)}p_{r'}  \text{ or } \sum_{r' < (1, 1,0...0)}p_{r'} < u \leq 1\}^{r_2} \\ \bm{1}\{0 \leq u \leq \sum_{r' < (0, 1,0...0)}p_{r'}  \text{ or } \sum_{r' < (1, 0,0...0)}p_{r'} < u  \sum_{r' < (1, 1,0...0)}p_{r'}\}^{1-r_2}, \\
\end{multline*}
$\cdots$
\begin{multline*}
    f^{(p)}(x,u) = \bm{1}\{\sum_{r' < (0, ...,0, 1)}p_{r'} < u \leq \sum_{r' < (0, ..., 0, 1, 0)}p_{r'}  \text{ or } \cdots \text{ or } \sum_{r' < (1,..., 1)}p_{r'} < u \leq 1\}^{r_p} \\\bm{1}\{0 \leq u \leq \sum_{r' < (0, ..., 0, 1)}p_{r'}  \text{ or } \sum_{r' < (1, ..., 1, 0)}p_{r'} < u  \sum_{r' < (1, ..., 1)}p_{r'}\}^{1-r_p}.
\end{multline*}
Then we have that 
\begin{equation}
\begin{aligned}
    \mathbb{P}_{gt}(R = r|x) &= \int_0^1 \prod^p_{j=1} \text{Bernoulli}(f^{(j)}(x,u)) d u\\
    &= \int_0^1 \prod^p_{j=1} \text{Bernoulli}(f^{(j)}(x,H(\tilde{z})) d H(\tilde{z})\\
    &= \int_{\mathbb{R}} \prod^p_{j=1} \text{Bernoulli}(f^{(j)}(x,H(\tilde{z})) p(\tilde{z})d\tilde{z}\label{prgx}
\end{aligned}
\end{equation}
Hence, there exists a decomposition of the missingness mechanism in the form of 
\begin{equation*}
     \mathbb{P}_{gt}(R = r|x) = \int_{\mathbb{R}} \mathbb{P}_{gt}(R = r|x, \tilde{z})p(\tilde{z})d\tilde{z}, 
\end{equation*}
where 
\begin{equation*}
     \mathbb{P}_{gt}(R = r|x, \tilde{z}) = \prod^p_{j=1} \text{Bernoulli}(f^{(j)}(x,H(\tilde{z}))).
\end{equation*}

For the part of the data variables $X$, we separate the proof into two cases $d<p$ and $d=p$, where $d$ is the dimension of the manifold $\mathcal{X}$ for the data variable $X$.

\subsubsection{The approximation for the data distribution when $d < p$}\label{thm1pf_lowd} 
We first examine the  variable $Z$ for the data $X$. For clarity, let $p^x_{gt}(\cdot)$ denote the ground-truth manifold density of $X$(instead of simply $p_{gt}(\cdot))$. We aim to construct a bijection between $\mathcal{X}$ and $\mathbb{R}^d$ which transforms the ground-truth measure $\mu_{gt}$ to a Gaussian distribution, before which one step is needed that bijects between $\mathcal{X}$ and $\mathbb{R}^d$ using a diffeomorphism $\varphi(\cdot)$, so it transforms the ground-truth probability distribution $p^x_{gt}(x)$ to another distribution $p^u_{gt}(u)$, where $u \in \mathbb{R}^d$. The relationship between the two distributions is
\begin{equation*}
    p^u_{gt}(u)du = p^x_{gt}(x) \mu_V(dx)|_{x = \varphi^{-1}(u)} = \mu_{gt}(dx),
\end{equation*}
where $\mu_V(dx)$ is the volume measure with respect to $\mathcal{X}$. 

Define the function $F: \mathbb{R}^d \mapsto [0,1]^{d}$ as 
    \begin{equation*}
        F(u) = [F_1(u_{1}),F_2(u_{2};u_{1}), \cdots,F_{d}(u_{d};u_{d-1})],
    \end{equation*}
    with
    \begin{equation*}
        F_j(u_{j};u_{1:j-1}) = \int^{u_{ j}}_{u'_{j} = -\infty} p^u_{gt}(u'_{j}|u_{1:j-1})du'_{j}.
    \end{equation*}
    By this definition, we have 
    \begin{equation*}
        d F(u) = p^u_{gt}(u) d u.
    \end{equation*}
    Because $\varphi$ is a diffeomorphism, both $\varphi$ and $\varphi^{-1}$ are differentiable. Thus $dx/du$ is nonzero everywhere on the manifold. Considering $p^x_{gt}(x)$ is also nonzero everywhere, $p^u_{gt}(u)$ is nonzero everywhere.

Since $p^u_{gt}(u)$ is nonzero everywhere, $F(\cdot)$ is invertible. 
    W.l.o.g., let $Z \sim \mathcal{N}(0,I_{q_1})$ and $q_1 = d$. (In the case of $q_1 >d$, it is trivial to show that the last $q_1 -d$ dimensions can be ignored in the conditional distribution of $X$ given $Z$.) We define another differentiable and invertible function $G: \mathbb{R}^{d} \mapsto [0,1]^{d}$ as 
    \begin{equation*}
        G(z) = [G_1(z_{1}), G_2(z_{2}), \cdots, G_{d}(z_{{d}})]^T,
    \end{equation*}
    with
    \begin{equation*}
        G_j(z_{j}) = \int^{z_{j}}_{z'_{j} = -\infty}  \mathcal{N}(z_{j}|0,1)dz'_{j}.
    \end{equation*}
    Then 
    \begin{equation*}
        dG(z) = p(z)dz = \mathcal{N}(z|0,I)dz.
    \end{equation*}
    
    Now, consider a random variable $Y \in \mathbb{R}^p$ with a parameter $\beta > 0$ for its density: 
    \begin{equation*}
        p(y;\beta) = \int_{\mathbb{R}^d} p(y|z;\beta) p(z)dz.
    \end{equation*}
    Let 
    \begin{equation*}
        p(y|z;\beta) = \mathcal{N}(y|\varphi^{-1} \circ F^{-1} \circ G(z), \frac{1}{\beta}I)
    \end{equation*}
    so that 
    \begin{equation}
    \begin{aligned}
        p(y;\beta) &= \int_{\mathbb{R}^d} \mathcal{N}(y|\varphi^{-1} \circ F^{-1} \circ G(z), \frac{1}{\beta}I) dG(z)\\
        &= \int_{[0,1]^d} \mathcal{N}(y|\varphi^{-1} \circ F^{-1}(\xi), \frac{1}{\beta}I)d\xi\\
        &=   \int_{\mathbb{R}^d} \mathcal{N}(y|\varphi^{-1}(u),\frac{1}{\beta}I) p^u_{gt}(u) du\\
        &= \int_{y' \in \mathcal{X}} \mathcal{N}(y|y',\frac{1}{\beta}I) \mu_{gt}(dy')\label{py_beta}
    \end{aligned}
    \end{equation}

    Consider a measurable set $A \in \mathbb{R}^p$. Let 
    \begin{equation*}
        p_{R|X,\tilde{Z}}(r|y,\tilde{z}) =  \prod^p_{j=1} \text{Bernoulli}(f^{(j)}(y,H(\tilde{z})) 
    \end{equation*}
    as in (\ref{prgx}). Also, by (\ref{prgx}), we have 
    \begin{equation}\label{equal_gt}
        \lim_{\beta \rightarrow \infty} \int_{y \in A} \int \int p(y|z;\beta) p(z) p_{R|X,\tilde{Z}}(r|y,\tilde{z}) p(\tilde{z}) dz d\tilde{z} dy =\lim_{\beta \rightarrow \infty} \int_{y \in A}  p(y|\beta) p_{gt}(r|y) dy.
    \end{equation}
    Furthermore,
    \begin{equation}\label{yinA}
        \begin{aligned}
        \lim_{\beta \rightarrow \infty} \int_{y \in A}  p(y|\beta) p_{gt}(r|y) dy 
        &= \lim_{\beta \rightarrow \infty} \int_{y \in A}\left[ \int_{y' \in \mathcal{X}} \mathcal{N}(y|y',\frac{1}{\beta}I) \mu_{gt}(dy') p_{gt}(r|y)\right]dy\\
        &=  \lim_{\beta \rightarrow \infty}\int_{y' \in \mathcal{X}}\left[  \int_{y \in A}  \mathcal{N}(y|y',\frac{1}{\beta}I)   p_{gt}(r|y)dy\right]\mu_{gt}(dy')\\
        &= \int_{y' \in \mathcal{X}} \lim_{\beta \rightarrow \infty} \left[  \int_{y \in A}  \mathcal{N}(y|y',\frac{1}{\beta}I)   p_{gt}(r|y)dy\right]\mu_{gt}(dy').
    \end{aligned}
    \end{equation}
    The second equality follows by Fubini’s theorem. The third equality follows by the bounded convergence theorem. We now note that the term inside the first integration. Now, note that
    \[
    \lim_{\beta \rightarrow \infty} \left[  \int_{y \in A}  \mathcal{N}(y|y',\frac{1}{\beta}I) p_{gt}(r|y) dy \right] = 
    \begin{cases} 
        p_{gt}(r|y) & \text{if } y' \in A - \partial A, \\
        0 & \text{if } y' \in A^c - \partial A.
    \end{cases}
    \]
    We separate the manifold $\mathcal{X}$ into three parts: $\mathcal{X} \cap (A - \partial A)$, $\mathcal{X} \cap (A^c - \partial A)$ and $\mathcal{X} \cap  \partial A$. Then (\ref{yinA}) can be separated into three parts correspondingly. The first two parts can be derived as
    \begin{align*}
        \int_{\mathcal{X}\cap (A - \partial A)} \lim_{\beta \rightarrow \infty} \left[  \int_{y \in A}  \mathcal{N}(y|y',\frac{1}{\beta}I)   p_{gt}(r|y)dy\right]\mu_{gt}(dy') = \int_{\mathcal{X}\cap (A - \partial A)} p_{gt}(r|y) \mu_{gt}(dy') ,
    \end{align*}
    \begin{align*}
        \int_{\mathcal{X}\cap (A^c - \partial A)} \lim_{\beta \rightarrow \infty} \left[  \int_{y \in A}  \mathcal{N}(y|y',\frac{1}{\beta}I)   p_{gt}(r|y)dy\right]\mu_{gt}(dy') = \int_{\mathcal{X}\cap (A - \partial A)} 0\mu_{gt}(dy') = 0.
    \end{align*}
    For the third part, given the assumption that $\mu_{gt}(\partial A) = 0$, we have 
    \begin{equation*}
        0 \leq \int_{\mathcal{X}\cap \partial A} \lim_{\beta \rightarrow \infty} \left[  \int_{y \in A}  \mathcal{N}(y|y',\frac{1}{\beta}I)   p_{gt}(r|y)dy\right]\mu_{gt}(dy')\leq \int_{\mathcal{X}\cap \partial A}1 \mu_{gt}(dy')= \mu_{gt}(\mathcal{X}\cap \partial A) = 0.
    \end{equation*}
    Therefore, we have
    \begin{equation*}
        \int_{\mathcal{X}\cap \partial A} \lim_{\beta \rightarrow \infty} \left[  \int_{y \in A}  \mathcal{N}(y|y',\frac{1}{\beta}I)   p_{gt}(r|y)dy\right]\mu_{gt}(dy') = 0
    \end{equation*}
    and so
    \begin{align*}
        \lim_{\beta \rightarrow \infty} \int_{y \in A}  p(y|\beta) p_{gt}(r|y) dy &=  \int_{\mathcal{X}} \lim_{\beta \rightarrow \infty} \left[  \int_{y \in A}  \mathcal{N}(y|y',\frac{1}{\beta}I)   p_{gt}(r|y)dy\right]\mu_{gt}(dy') \\
        & = \int_{\mathcal{X}\cap (A - \partial A)} p_{gt}(r|y) \mu_{gt}(dy') + 0 + 0\\
        & = \int_{\mathcal{X}\cap A} p_{gt}(r|y) \mu_{gt}(dy').
    \end{align*}
    Combining this and (\ref{equal_gt}), we have the result as desired.

\subsubsection{The approximation for the data distribution when $d = p$}\label{thm1pf_dp}
Take $\varphi$ in the case of $d=p$ to be the identity function (so that we can actually omit the diffeomorphism step when constructing a bijection between $\mathcal{X}$ and $\mathbb{R}^d$.)
        In the case of $d=p$, $p^x_{gt}(x)\triangleq \mu_{gt}(dx)/dx$ represents the ground-truth probability density with respect to the standard Lebesgue measure in Euclidean space. 
    Define the function $F: \mathbb{R}^p \mapsto [0,1]^{p}$ as
    \begin{equation*}
        F(x) = [F_1(x_{1}),F_2(x_{2};x_{1}), \cdots,F_{p}(x_{p};x_{p-1})],
    \end{equation*}
    \begin{equation*}
        F_j(x_{j};x_{1:j-1}) = \int^{x_{ j}}_{x'_{j} = -\infty} p^x_{gt}(x'_{j}|x_{1:j-1})dx'_{j}.
    \end{equation*}
    By this definition, we have 
    \begin{equation*}
        d F(x) = p^x_{gt}(x) d x.
    \end{equation*}
    Since $p^x_{gt}(x)$ is nonzero everywhere, $F(\cdot)$ is invertible. 
    W.l.o.g., let $Z \sim \mathcal{N}(0,I_{q_1})$ and $q_1 = p$. (In the case of $q_1 >p$, it is trivial to show that the last $q_1 -p$ dimensions can be ignored in the conditional distribution of $X$ given $Z$.) We define another differentiable and invertible function $G: \mathbb{R}^{p} \mapsto [0,1]^{p}$ as
    \begin{equation*}
        G(z) = [G_1(z_{1}), G_2(z_{2}), \cdots, G_{p}(z_{{p}})]^T,
    \end{equation*}
    \begin{equation*}
        G_j(z_{j}) = \int^{z_{j}}_{z'_{j} = -\infty}  \mathcal{N}(z_{j}|0,1)dz'_{j}.
    \end{equation*}
    Then
    \begin{equation*}
        dG(z) = p(z)dz = \mathcal{N}(z|0,I)dz.
    \end{equation*}
    Now, consider a random variable $Y \in \mathbb{R}^p$ with a parameter $\beta > 0$ for its density: 
    \begin{equation*}
        p(y;\beta) = \int_{\mathbb{R}^p} p(y|z;\beta) p(z)dz.
    \end{equation*}
    Let 
    \begin{equation*}
        p(y|z;\beta) = \mathcal{N}(y|F^{-1} \circ G(z), \frac{1}{\beta}I)
    \end{equation*}
    so that 
    \begin{align*}
        p(y;\beta) &= \int_{\mathbb{R}^p} \mathcal{N}(y|F^{-1} \circ G(z), \frac{1}{\beta}I) dG(z)\\
        &= \int_{[0,1]^p} \mathcal{N}(y|F^{-1}(\xi), \frac{1}{\beta}I)d\xi,
    \end{align*}
    where $\xi = G(z)$.
    Let $y' = F^{-1} (\xi)$ such that $d\xi = dF(y') = p^x_{gt}(y')dy'$. Plugging this expression into $p(y;\beta)$ we have
    \begin{equation*}
        p(y;\beta) = \int_{\mathbb{R}^p} \mathcal{N}(y|y',\frac{1}{\beta}I) p^x_{gt}(y') dy'.
    \end{equation*}
    As $\beta \rightarrow \infty$, $\mathcal{N}(y|y',\frac{1}{\beta}I)$ becomes a Dirac-delta function, resulting in 
    \begin{equation*}
        \lim_{\beta \rightarrow \infty} p(y;\beta) = \int_{\mathbb{R}^p} \delta(y' - y) p^x_{gt}(y')dy' = p^x_{gt}(y).
    \end{equation*}

Putting the results for $p_{gt}(x)$ and $p_{gt}(r|x)$ together, we have the result for the case $d=p$ in the theorem.

\end{proof}
\subsection{Proof of Theorem \ref{mechanism}}
\begin{proof}
Under Assumption \ref{nscass},
\begin{equation*}
\begin{aligned}
    p_{gt}(r|x) & = \int p_{gt}(r|x, \tilde{z}) p(\tilde{z}) d\tilde{z}\\
    & = \int \prod^p_{j=1} p_{gt}(r_j|x_{-j}, \tilde{z}) p(\tilde{z}) d\tilde{z}\\
    & = \int \prod^p_{j=1} p_{gt}(r_j|x_{-j}, R_{-j} = \bm{1}, \tilde{z}) p(\tilde{z}) d\tilde{z}.
\end{aligned}
\end{equation*}
Thus, $p_{gt}(r|x)$ is only a function of observed data.
\end{proof}
\subsection{Proof of Corollary \ref{cor1}}%
\begin{proof}
    
Using the odds ratio parametrization \citep{yun2007semiparametric, chen2010compatibility} with odds ratio function
\begin{equation*}
    \textrm{OR}(r,x) \equiv \textrm{OR}(r,x;r_0 = \bm{1}, x_0 = \bm{0}) = \frac{p_{gt}(r|x)}{p_{gt}(R=\bm{1}|x)}\frac{p_{gt}(R=\bm{1}|X=\bm{0})}{p_{gt}(r|X=\bm{0})}.
\end{equation*}
Then the full-data distribution \citep{li2022self} can be written as
\begin{equation*}
    p_{gt}(x,r) = \frac{OR(r,x)p_{gt}(x|R=\bm{1})p_{gt}(r|X = \bm{0})}{\sum_{r'}\mathbb{E}[OR(r',y)|R=\bm{1}]p_{gt}(r'|X=\bm{0})},
\end{equation*}
which is also a function of observed data only.
\end{proof}

\subsection{Proof of Theorem \ref{LK}}
\begin{proof}
We need to show $\log p_{\theta, \psi}(x_{(r)},r) \ge L_k$, $L_k \geq L_m$ for $k \geq m \geq 1$; we also need to show $\log p_{\theta, \psi}(x_{(r)},r)  = \lim_{k \rightarrow \infty} L_k$ assuming $w$ is bounded, with 
    \begin{equation*}
        \mathcal{L}_k (\theta, \psi, \phi, \lambda) = \mathbb{E} \big[\log \frac{1}{k} \sum^k_{l=1} w_k\big],
    \end{equation*}
    where 
    \begin{equation*}
        w_l =\frac{p_{\theta}(x_{(r)}|z_l) p_{\psi}(r|x_{(r)}, x_{(m),l}, \tilde{z_l})p(z_l)p(\tilde{z_l})}{q_{\phi}(z_l|x_{(r)}, r)q_{\lambda}(\tilde{z_l}|x_{(r)}, r)},
    \end{equation*}
    for $1 \leq l \leq k.$
We have
\begin{equation*}
        \mathcal{L}_k (\theta, \psi, \phi, \lambda) = \mathbb{E} \big[\log \frac{1}{k} \sum^k_{l=1} w_l\big] \leq \log \mathbb{E} \big[ \frac{1}{k} \sum^k_{l=1} w_l\big] = \log p_{\theta, \psi}(x_{(r)},r).
\end{equation*}
For $k \geq m$, denoting $h_l = (z_l, \tilde{z_l}, x_{(m),l})$ for $1 \leq l \leq k$
\begin{equation*}
\begin{aligned}
    \mathcal{L}_k  & = \mathbb{E}_{h_1,\cdots, h_k} \big[\log \frac{1}{k} \sum^k_{l=1} w_l\big]= \mathbb{E}_{h_1,\cdots, h_k} \big[\log \mathbb{E}_{I = \{l_1,...,l_m\}}\big[\frac{1}{m} \sum^m_{j=1} w_{l_j}\big]\big]\\
    & \geq \mathbb{E}_{h_1,\cdots, h_k} \big[\mathbb{E}_{I = \{l_1,...,l_m\}}\big[\log \frac{1}{m} \sum^m_{j=1} w_{l_j}\big]\big]= \mathbb{E}_{h_1,\cdots, h_m} \big[\log \frac{1}{m} \sum^m_{l=1} w_l\big] = \mathcal{L}_m.
\end{aligned}
\end{equation*}
Consider the random variable $M_k = \frac{1}{k} \sum^k_{l=1} w_l$. If $w$ is bounded. then it follows from SLLN that $M_k$ converges to $\mathbb{E}_{q(h|x_{(r)},r)}[w] = p_{\theta, \psi}(x_{(r)},r)$ almost surely. Hence $\mathcal{L}_k = \mathbb{E} [\log M_k]$ converges to $ \log p_{\theta, \psi}(x_{(r)},r)$ as $k \rightarrow \infty$.
\end{proof}
\subsection{Proof of Proposition \ref{prop1}}
\begin{proof}
 First, define the random variable $Y_K \coloneqq \frac{1}{K}\sum^K_{k=1} w_k$, which corresponds to the sample mean of $w_1, \cdots, w_K.$ We have
\begin{align*}
    \hat{\mathcal{L}}_K = \log Y_K & = \log(\mathbb{E}[w] + (Y_K - \mathbb{E}[w] ))=  \log \mathbb{E}[w] - \sum^{\infty}_{i=1} \frac{(-1)^i}{i\mathbb{E}[w]^i} (Y_K - \mathbb{E}[w])^i ,
\end{align*}
and so
\begin{equation}\label{taylorL}
    \mathbb{E}[\hat{\mathcal{L}}_K] = \mathbb{E}[\log Y_K] = \log \mathbb{E}[w] -\sum^{\infty}_{i=1} \frac{(-1)^i}{i\mathbb{E}[w]^i} \mathbb{E}[(Y_K - \mathbb{E}[w])^i].
\end{equation}

Denote $\gamma_i \coloneqq \mathbb{E}[(Y_K - \mathbb{E}[Y_K])^i ]$, the $i$-th central moments of $Y_K$ for $i \geq 2$, and $\mu_i \coloneqq \mathbb{E}[(w - \mathbb{E}[w])^i ]$, the $i$-th central moments of $P$ for $i \geq 2$. Denote $\gamma = \mathbb{E}[Y_K]$ and $\mu = \mathbb{E}[w]$, the first non-central moments. Since $Y_K$ is a sample mean, we can use existing results that relate $\gamma_i$ to $\mu_i$. In \cite{angelova2012moments}, Theorem 1 gives the relations: $\gamma = \mu,
    \gamma_2 = \frac{\mu_2}{K},
    \gamma_3 = \frac{\mu_3}{K^2},
    \gamma_4 = \frac{3}{K^2}\mu^2_2 + \frac{1}{K^3}(\mu_4 - 3\mu^2_2),
    \gamma_5 = \frac{10}{K^3}\mu_3\mu_2 + \frac{1}{K^4}(\mu_5 - 10 \mu_3 \mu_2),$
where the first two are well-known relations of means and variances of between samples and populations. Expanding (\ref{taylorL}) to the fifth order using the above relations gives
\begin{multline}\label{ELK}
    \mathbb{E}[\hat{\mathcal{L}}_K] = \log \mathbb{E}[w] -\frac{1}{2\mu^2}\frac{\mu_2}{K} + \frac{1}{3\mu^3}\frac{\mu_3}{K^2} - \frac{1}{4\mu^4}\left(\frac{3}{K^2}\mu^2_2+\frac{1}{K^3}(\mu_4 - 3\mu^2_2)\right) \\
    +\frac{1}{5\mu^5}\left(\frac{10}{K^3}\mu_3\mu_2 + \frac{1}{K^4}(\mu_5 - 10\mu_3\mu_2)\right) + o(K^{-3}).
\end{multline}
The result for bias follows by regrouping the terms by order of $K$ and then subtracting $\log p_{\theta, \psi}(x_{(r)},r) = \log \mathbb{E}[w]$ from (\ref{ELK}).

By using the definition of the variance and the series expansion of
the logarithm function, we have
\begin{align*}
    \text{Var}(\log Y_K) &= \mathbb{E}[(\log Y_K - \mathbb{E}[\log Y_K])^2]\\
    &=  \mathbb{E}\left[\left(\log \mu - \sum^{\infty}_{i=1} \frac{(-1)^i}{i\mu^i} (Y_K - \mu)^i - \log \mu +\sum^{\infty}_{i=1} \frac{(-1)^i}{i\mu^i} \mathbb{E}[(Y_K - \mu)^i] \right)^2\right]\\
    &= \mathbb{E}\left[\left( \sum^{\infty}_{i=1} \frac{(-1)^i}{i\mu^i} (\mathbb{E}[(Y_K - \mu)^i]-  (Y_K - \mu)^i ) \right)^2\right].
\end{align*}
By expanding the above equation to the third order we have the following:
\begin{align*}
    \text{Var}(\log Y_K) \approx& \frac{\gamma_2}{\mu^2} - \frac{1}{\mu^3}(\gamma_3 - \gamma_1\gamma_2)+\frac{2}{3\mu^4}(\gamma_4 - \gamma_1\gamma_3) +\frac{1}{4\mu^4}(\gamma_4 - \gamma^2_2) \\   & -\frac{1}{3\mu^5}(\gamma_5 - \gamma_2 \gamma_3) + \frac{1}{9\mu^6}(\gamma_6 - \gamma_3^2).
\end{align*}
By substituting the sample moments $\gamma_i$ of $Y_K$ with the central moments $\mu_i$ of distribution $P$ using their relations, we get
\begin{equation*}
    \text{Var}(\log Y_K) = \frac{1}{K}\frac{\mu_2}{\mu^2} - \frac{1}{K^2} \left( \frac{\mu_3}{\mu^3} - \frac{5\mu^2_2}{2\mu^4}\right) + o(K^{-2}).
\end{equation*}
\end{proof}%
\subsection{Proof of Corollary \ref{consistency}}
\begin{proof}\begin{align*}
        \mathbb{P}(|\hat{\mathcal{L}}_K - \log p_{\theta, \psi}(x_{(r)},r) | \geq \epsilon) &= \mathbb{P}(|\hat{\mathcal{L}}_K - \mathbb{E}[\hat{\mathcal{L}}_K] + \mathbb{E}[\hat{\mathcal{L}}_K]\log p_{\theta, \psi}(x_{(r)},r) | \geq \epsilon)\\
        & \leq \mathbb{P}(|\hat{\mathcal{L}}_K - \mathbb{E}[\hat{\mathcal{L}}_K]| + |\mathbb{E}[\hat{\mathcal{L}}_K]\log p_{\theta, \psi}(x_{(r)},r) | \geq \epsilon) \\
        & \leq \mathbb{P}(|\hat{\mathcal{L}}_K - \mathbb{E}[\hat{\mathcal{L}}_K]| \geq \frac{\epsilon}{2}) + \mathbb{P}(|\mathbb{E}[\hat{\mathcal{L}}_K]\log p_{\theta, \psi}(x_{(r)},r) | \geq \frac{\epsilon}{2}),
    \end{align*}
    where the second term is deterministic and is either $0$ or $1$, and it will always be zero for $K$ large enough according to Proposition 1. Applying Chebyshev's inequality to the first term and setting $\tau = \frac{\epsilon}{2\sqrt{\text{Var}(\hat{\mathcal{L}}_K)}}$, we have 
    \begin{align*}
        \mathbb{P}(|\hat{\mathcal{L}}_K - \mathbb{E}[\hat{\mathcal{L}}_K]| \geq \frac{\epsilon}{2}) &= \mathbb{P}(|\hat{\mathcal{L}}_K - \mathbb{E}[\hat{\mathcal{L}}_K]| \geq \tau \sqrt{\text{Var}(\hat{\mathcal{L}}_K)})\leq \frac{1}{\tau^2}
        = 4\frac{ \sqrt{\text{Var}(\hat{\mathcal{L}}_K)}}{\epsilon^2}
        = O(1/K).
    \end{align*}
    For $K \to \infty$ and any $\epsilon >0$, this term has a limit of 0. And we have the convergence in probability, and hence consistency.
\end{proof}

\subsection{Proof of Lemma \ref{lemma_r}}%
\begin{proof}
Without loss of generality, we assume the dimension of $\tilde{z}$ to be $\kappa_2 = 1$ in the proof. (In the case of $\kappa_2 >1$, it is trivial to show that the last $\kappa_2 -1$ dimensions can be ignored in the conditional distribution of $R$ given $\tilde{Z}$ and $X$.) 
Define a differentiable and invertible function $G: \mathbb{R}\rightarrow [0,1]$ as 
\begin{equation*}
G(\tilde{z}) = \int^{\tilde{z}}_{\tilde{z}' = -\infty} \mathcal{N}(\tilde{z}|0,1) d \tilde{z}'.
\end{equation*}
Then 
\begin{equation*}
    d G(\tilde{z}) = p(\tilde{z})d\tilde{z} = \mathcal{N}(\tilde{z}|0,1) d \tilde{z}.
\end{equation*}
For $r_j \in \{0,1\}$ for $1 \leq j \leq p$, we have
\begin{align*}
        \mathbb{P}_{\psi} (R_1 = r_1, \cdots, R_p = r_p|x) & = \int_{\mathbb{R}} \prod^p_{j=1} \mathbb{P}_{\psi}(R_j = r_j|x,\tilde{z})p(\tilde{z}) d \tilde{z} \\
        & = \int_{\mathbb{R}} \prod^p_{j=1} \mathbb{P}_{\psi}(R_j = r_j|x,G(\tilde{z})) d G(\tilde{z})\\
        & = \int_0^1 \prod^p_{j=1} \mathbb{P}_{\psi}(R_j = r_j|x,u) d u\\
        & = \int_0^1 \prod^p_{j=1} \text{Bernoulli}(f^{(j)}_{\psi}(x,u)) d u,
\end{align*}
where $f_{\psi}(x,u) = (f^{(1)}_{\psi}(x,u), \cdots, f^{(p)}_{\psi}(x,u))$ is the deterministic decoder for the missingness mechanism.
Now, denote the ground-truth joint density as 
\begin{equation*}
\mathbb{P}_{gt}(R = r|x) = p_r, \quad \text{with } \sum_{r \in \{0,1\}^p} p_r = 1.
\end{equation*}

With a little abuse of notation, we write below $r < r'$ by treating $r$, $r'$ as binary numbers. For example, if $r = (0,0,1)$ and $r' = (0,1,0)$, then write $r < r'.$
By using indicator functions $\bm{1}(\cdot)$, we can write the ground-truth distribution of the missingness mechanism as
\begin{multline*}
     \mathbb{P}_{gt}(R = r|x) = \int_{[0,1]} \bm{1}\{\sum_{r' < (1,0...0)}p_{r'} < u \leq 1\}^{r_1} \bm{1}\{0 \leq u \leq \sum_{r' < (1,0...0)}p_{r'} \}^{1-r_1} \\
      \bm{1}\{\sum_{r' < (0, 1,0...0)}p_{r'} < u \leq \sum_{r' < (1, 0,0...0)}p_{r'}  \text{ or } \sum_{r' < (1, 1,0...0)}p_{r'} < u \leq 1\}^{r_2} \\ \bm{1}\{0 \leq u \leq \sum_{r' < (0, 1,0...0)}p_{r'}  \text{ or } \sum_{r' < (1, 0,0...0)}p_{r'} < u  \sum_{r' < (1, 1,0...0)}p_{r'}\}^{1-r_2} \\
      \cdots  \\
      \bm{1}\{\sum_{r' < (0, ...,0, 1)}p_{r'} < u \leq \sum_{r' < (0, ..., 0, 1, 0)}p_{r'}  \text{ or } \cdots \text{ or } \sum_{r' < (1,..., 1)}p_{r'} < u \leq 1\}^{r_p} \\ \bm{1}\{0 \leq u \leq \sum_{r' < (0, ..., 0, 1)}p_{r'}  \text{ or } \sum_{r' < (1, ..., 1, 0)}p_{r'} < u  \sum_{r' < (1, ..., 1)}p_{r'}\}^{1-r_p}
 d u.
\end{multline*}
Hence, we can see that by setting 
\begin{equation*}
     f^{(1)}_{\psi^*}(x,u) =  \bm{1}\{\sum_{r' < (1,0...0)}p_{r'} < u \leq 1\}^{r_1} \bm{1}\{0 \leq u \leq \sum_{r' < (1,0...0)}p_{r'} \}^{1-r_1},
\end{equation*}
\begin{multline*}
    f^{(2)}_{\psi^*}(x,u) = \bm{1}\{\sum_{r' < (0, 1,0...0)}p_{r'} < u \leq \sum_{r' < (1, 0,0...0)}p_{r'}  \text{ or } \sum_{r' < (1, 1,0...0)}p_{r'} < u \leq 1\}^{r_2} \\ \bm{1}\{0 \leq u \leq \sum_{r' < (0, 1,0...0)}p_{r'}  \text{ or } \sum_{r' < (1, 0,0...0)}p_{r'} < u  \sum_{r' < (1, 1,0...0)}p_{r'}\}^{1-r_2}, \\
\end{multline*}
$\cdots$
\begin{multline*}
    f^{(p)}_{\psi^*}(x,u) = \bm{1}\{\sum_{r' < (0, ...,0, 1)}p_{r'} < u \leq \sum_{r' < (0, ..., 0, 1, 0)}p_{r'}  \text{ or } \cdots \text{ or } \sum_{r' < (1,..., 1)}p_{r'} < u \leq 1\}^{r_p} \\ \bm{1}\{0 \leq u \leq \sum_{r' < (0, ..., 0, 1)}p_{r'}  \text{ or } \sum_{r' < (1, ..., 1, 0)}p_{r'} < u  \sum_{r' < (1, ..., 1)}p_{r'}\}^{1-r_p},
\end{multline*}
the decoder $f_{\psi^*}(x,u)$ can recover the ground-truth distribution of missingness mechanism, i.e., 
\begin{equation*}
    \mathbb{P}_{gt}(R = r|x) = \int_0^1 \prod^p_{j=1} \text{Bernoulli}(f^{(j)}_{\psi^*}(x,u)) d u.
\end{equation*}
\end{proof}
\subsection{Proof of Theorem \ref{final_thm}}%
\begin{proof}
    For VAE (the special case of $K=1$ for IWAE) for complete data, \cite{dai2019diagnosing} showed that there is a sequence of decoders such that 
    \begin{equation*}
        \lim_{t \rightarrow \infty} p_{\theta^*_t}(x) = p_{gt}(x),
    \end{equation*}
    under certain conditions.
    We first show that such a result on complete data is applicable to the observed data when there is missingness in the data.
    
    Denote the dimension of the observed data as $p_{(r)}$ for $1 \leq p_{(r)} \leq p$, with $p_{(r)}$ being the number of $1$'s in $r$.
    Define the function $F: \mathbb{R}^{p_{(r)}} \mapsto [0,1]^{p_{(r)}}$ as
    \begin{equation*}
        F(x_{(r)}) = [F_1(x_{(r),1}),F_2(x_{(r),2};x_{(r),1}), \cdots,F_{p_{(r)}}(x_{(r),p_{(r)}};x_{(r),p_{(r)}-1})],
    \end{equation*}
    where for $1\leq j\leq p_{(r)}$, $F_j$ is defined as
    \begin{equation*}
        F_j(x_{(r), j};x_{(r),1:j-1}) = \int^{x_{(r), j}}_{x'_{(r), j} = -\infty} p_{gt}(x'_{(r), j}|x_{(r), 1:j-1})dx'_{(r), j},
    \end{equation*}
    where $x_{(r),j}$ denotes the $j^\text{th}$ observed dimension in $x_{(r)}$. 
    By this definition, we have 
    \begin{equation*}
        d F(x_{(r)}) = p_{gt}(x_{(r)}) d x_{(r)}.
    \end{equation*}
    Since $p_{gt}(x)$, and hence $p_{gt}(x_{(r)})$, is assumed to be nonzero everywhere, $F(\cdot)$ is invertible. 
    
    Similarly, we define another differentiable and invertible function $G: \mathbb{R}^{p_{(r)}} \mapsto [0,1]^{p_{(r)}}$ for the part of the latent variable corresponding to the observed part of the data $x_{(r)}$, which we denote as $z_{(r)}$: 
    \begin{equation*}
        G(z_{(r)}) = [G_1(z_{(r),1}), G_2(z_{(r),2}), \cdots, G_{p_{(r)}}(z_{(r),{p_{(r)}}})]^T,
    \end{equation*}
    with
    \begin{equation*}
        G_j(z_{(r),j}) = \int^{z_{(r),j}}_{z'_{(r),j} = -\infty}  \mathcal{N}(z_{(r),j}|0,1)dz'_{(r),j}.
    \end{equation*}
    Then 
    \begin{equation*}
        dG(z_{(r)}) = p(z_{(r)})dz_{(r)} = \mathcal{N}(z_{(r)}|0,I)dz_{(r)}.
    \end{equation*}
    Let the decoder be 
    \begin{equation*}
        f_{\mu_x}(z_{(r)};\theta^*_t) = F^{-1} \circ G(z_{(r)}),
    \end{equation*}
    with
    \begin{equation*}
        \gamma^*_t = \frac{1}{t}.
    \end{equation*}
    Then we have that 
    \begin{align*}
        p_{\theta^*_t}(x_{(r)}) &= \int_{\mathbb{R}^{p_{(r)}}} p_{\theta^*_t}(x_{(r)}|z_{(r)})p(z_{(r)})dz_{(r)} \\&= \int_{\mathbb{R}^{p_{(r)}}} \mathcal{N}(x_{(r)}|F^{-1} \circ G(z_{(r)}), \gamma^*_tI)dG(z_{(r)}).
    \end{align*}
    Let $\xi = G(z_{(r)})$ such that 
    \begin{equation*}
        p_{\theta^*_t}(x_{(r)})= \int_{[0,1]^{p_{(r)}}} \mathcal{N}(x_{(r)}|F^{-1} (\xi), \gamma^*_tI)d\xi,
    \end{equation*}
    and let $x'_{(r)} = F^{-1} (\xi)$ such that $d \xi = dF(x'_{(r)}) = p_{gt}(x'_{(r)})d x'_{(r)}$. Plugging this expression into the previous $p_{\theta^*_t}(x_{(r)})$ we get
    \begin{equation*}
        p_{\theta^*_t}(x_{(r)})= \int_{\mathbb{R}^{p_{(r)}}} \mathcal{N}(x_{(r)}|x'_{(r)}, \gamma^*_tI)p_{gt}(x'_{(r)})d x'_{(r)}.
    \end{equation*}
    As $t \rightarrow \infty, \gamma^*_t$ becomes infinitely small and $\mathcal{N}(x_{(r)}|x'_{(r)}, \gamma^*_tI)$ becomes a Dirac-delta function, resulting in 
    \begin{equation*}
        \lim_{t \rightarrow \infty} p_{\theta^*_t}(x_{(r)}) = \int_{\mathcal{X}_{(r)}}\delta(x'_{(r)}-x_{(r)})p_{gt}(x'_{(r)})dx'_{(r)} = p_{gt}(x_{(r)}).
    \end{equation*}
    where $\mathcal{X}_{(r)}$ denotes the part of the data space corresponding to the observed dimensions of the data.
    
    Combining this result with Lemma \ref{lemma_r} that 
   there exists a decoder with parameter $\psi^*$ that recovers the ground truth  $p_{\psi^*}(r|x_{(r)}) = p_{gt}(r|x_{(r)}) $, we have that there is a sequence of decoders for the observed joint distribution such that
    \begin{equation*}
        \lim_{t \rightarrow \infty} p_{\theta^*_t, \psi^*_t} (x_{(r)}, r) = p_{gt}(x_{(r)}, r).
    \end{equation*}
    By continuity, we have 
    \begin{equation*}
        \lim_{t \rightarrow \infty} \log p_{\theta^*_t, \psi^*_t} (x_{(r)}, r) = \log p_{gt}(x_{(r)}, r).
    \end{equation*}
    According to Theorem \ref{LK}, if $w$ is bounded, then%
\begin{equation*}
    \lim_{K \rightarrow \infty} \mathcal{L}_K (\theta^*_{t}, \psi^*_{t}, \phi^*_{t}, \lambda^*_{t})  = \log p_{\theta^*_t, \psi^*_t}(x_{(r)}, r),\quad \forall \phi^*_t \in \Omega_\phi, \lambda^*_t \in \Omega_\lambda.
\end{equation*}
So we have
    \begin{equation*}
    \lim_{t\to\infty} \lim_{K \rightarrow \infty} \mathcal{L}_K (\theta^*_{t}, \psi^*_{t}, \phi^*_{t}, \lambda^*_{t})  = \log p_{gt}(x_{(r)}, r).
\end{equation*}
\end{proof}
\section{Parameter Settings for Simulation Study}
We followed the parameter settings in \cite{malinsky2021semiparametric} for the simulation study for mean estimation. For K = 3 missing variables, missingness patterns $(1, 0, 0)$, $(0, 1, 0)$, $(1, 1, 0)$, $(0, 0, 1)$, $(1, 0, 1)$, $(0, 1, 1)$, $(1, 1, 1)$, and $(0, 0, 0)$ were sampled with probabilities $0.169$, $0.153$, $0.136$, $0.119$, $0.102$,
$0.085$, $0.169$, and $0.068$ respectively (rounded to the $3^\text{rd}$ decimal place). Set the mean parameter vector for each missingness pattern $R=r$ is $\mu_0(r) = \Sigma_0 h(r)$ where 
\begin{align*}
h(1, 0, 0) &= (1.4, 1.6, 0.9)'\\
h(0, 1, 0) &= (1.9, 1.1, 1.4)'\\
h(1, 1, 0) &= (1.9, 1.6, 0.2)'\\
h(0, 0, 1) &= (0.5, 1.9, 2.1)'\\
h(1, 0, 1) &= (0.5, 2.4, 0.9)'\\
h(0, 1, 1) &= (1.0, 1.9, 1.4)'\\
h(1, 1, 1) &= (1.0, 2.4, 0.2)'\\
h(0, 0, 0) &= (1.4, 1.1, 2.1)'.
\end{align*}
and the variance-covariance matrix as 
\begin{equation*}
    \Sigma_0 = \begin{pmatrix}
4.4 & 1.3 & -2.8\\
1.3 & 3.2 & 1.3\\
-2.8 & 1.3 &3.5
\end{pmatrix},
\end{equation*}

It is straightforward to verify that under these parameter settings, the assumption of no self-censoring is met in the generated data.

\section{More Details on the Choice of the Latent Dimension $\kappa_2$}\label{app_dimzt}
\begin{figure}[tbh!]
    \centering
    \includegraphics[width = 0.8\textwidth]{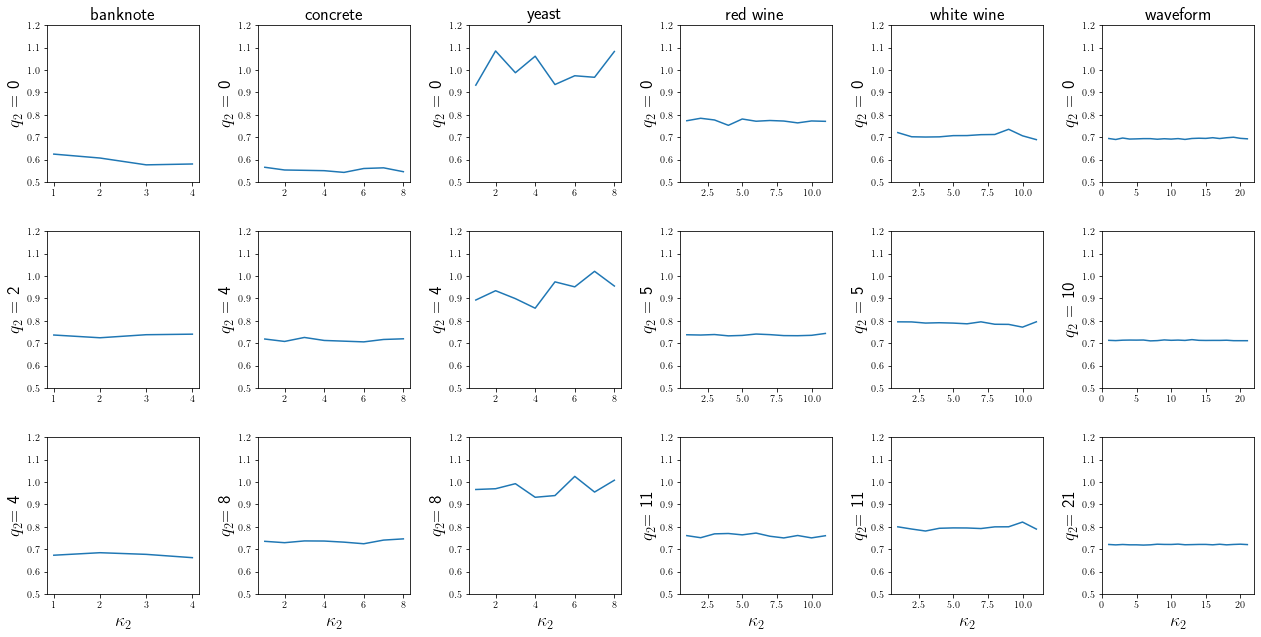}
    \caption{Imputation RMSE with different latent dimensions for the missingness mechanism, dim($\tilde{Z}$) in UCI datasets with simulated no self-censoring missingness. Each column is a different dataset, and each row represents the true latent dimension for the missingness mechanism in the simulation in the corresponding dataset. The y-axis shows the imputation RMSE on the synthetically missing entries. The x-axis shows the latent dimension for the missingness mechanism used in the learning model.}
    \label{fig:uciztplot}
\end{figure}
We experimented with the choice of latent dimensions for the missingness mechanism using the UCI datasets.  We simulated missingness on the original complete UCI data according to no self-censoring with the latent $q_2$ dimensions chosen from $0, \lfloor p/2\rfloor$ and $p$, where $p$ is the data dimension of each dataset. In each of these simulation settings, (fixing the latent dimensions for the data, which have been known for the complete datasets through cross-validation using the generative models,) we experimented with different latent dimensions for the missingness mechanism, $\kappa_2$ in the learning model. See Figure \ref{fig:uciztplot}. There does not appear to be significant difference among different choices of $\kappa_2$ in the learning model in each case, and choosing the dimension $\kappa_2$ to be $1$ works sufficiently well in the learning model from these experiments, no matter what the true latent dimensions are. This empirical result is consistent with the theoretical conclusion in Lemma \ref{lemma_r}.

\section{Additional Experimental Results}
\subsection{Maximum Mean Discrepancy (MMD) for synthetic data experiments}
We present more numerical results on the comparison of generated data in the synthetic data experiments with the condition of no self-censoring in Section 5.1 in terms of maximum mean discrepancy (MMD). As shown in Table \ref{mmdtb} below, the proposed method shows a clear advantage over the other state-of-the-art methods in terms of mean of MMD when comparing the generated data distribution with the ground-truth complete-data distribution $p_{gt}(x)$.
\begin{table}[!htb]
\caption{Mean(SD) for maximum mean discrepancy (MMD) of simulated 3D no self-censoring missing data using different imputation methods, for both linear and nonlinear censoring with corresponding generative model architectures.}
\label{mmdtb}
\begin{center}
\begin{tabular}{lcc|cc}
\hline
\multirow{2}{*}{Method} & \multicolumn{2}{c|}{Using latent variables}& \multicolumn{2}{c}{No latent variables}\\
& linear & nonlinear& linear & nonlinear \\ \hline
Proposed &  \B 0.0086 (0.0074) &  \B 0.0062 (0.0071)& \B 0.0126 (0.0203) &\B 0.0053 (0.0039)\\
GINA &  0.0098 (0.0102) & 0.0073  (0.0069) & 0.0143 (0.0131) & 0.0061 (0.0039)\\
Not-MIWAE & 0.0167 (0.0192)  & 0.0085 (0.0083)& 0.0145 (0.0155) & 0.0059 (0.0038)\\
\hline
\end{tabular}%
\end{center}
\end{table}
\subsection{Experiments with no self-censoring in UCI data}
For the real-world data from UCI machine learning repository, we evaluated the imputation performance in scenarios where the missingness mechanism aligns with the no self-censoring assumption, but without using latent variables for the missingness mechanism. To emphasize the role of the MNAR mechanism, we primarily focused on linear transformations for generating the missingness indicators from the data.

We simulated the missingness according to no self-censoring without latent variables as described in the synthetic experiments in Section 5.1. 

The results, as presented in Table \ref{uci_nsc}, consistently indicated that our method outperformed alternative approaches in the missingness mechanism of no self-censoring, even though the simulation of missingness did not rely on latent variables. 

\begin{table}[tbh!]
\begin{center}
\caption{Mean(SD) for imputation RMSE on  UCI Data with no self-censoring. By default, the missingness mechanisms in deep generative models are general linear structures, while (nl) indicates that a nonlinear structure is used instead, and (s) means self-censoring structure which particularly applies to not-MIWAE.}
\label{uci_nsc}
\begin{tabular}{lrrrrrr}
\hline
Method & \multicolumn{1}{c}{banknote} &   \multicolumn{1}{c}{concrete}& \multicolumn{1}{c}{yeast} &\multicolumn{1}{c}{red wine} & 
\multicolumn{1}{c}{white wine} & \multicolumn{1}{c}{waveform}\\ \hline
Proposed &  \B 0.62(0.09)   & \B 0.65(0.05) & \B 0.91(0.08) & \B 0.71(0.03) & \B 0.75(0.02) & \B 0.69(0.01)\\
Proposed(nl) &\B 0.62(0.08)  & \B 0.67(0.05) & \B 0.95(0.10) & \B 0.73(0.04)& \B 0.77(0.03) & \B 0.70(0.01)\\
GINA &  0.82(0.12)  & 0.85(0.04) & 1.06(0.05) & 0.87(0.04) &  0.91(0.02) &0.78(0.01)\\
GINA(nl) & 0.85(0.12)  & 0.90(0.05) & 1.11(0.05) & 0.91(0.04) & 0.96(0.03) & 0.79(0.01)\\
Not-MIWAE & 0.76(0.15)   & 0.81(0.10) & 1.16(0.18) & 0.85(0.07) & 0.88(0.08) & 0.73(0.01)\\
Not-MIWAE(nl) & 0.92(0.11)  & 0.85(0.08) & 1.06(0.06) & 0.86(0.04) & 0.90(0.04) & 0.77(0.02)\\
Not-MIWAE(s) & 1.82(0.78)  & 1.74(0.55) & 2.44(0.67) & 2.27(0.63) & 2.37(0.4) & 1.94(0.58)\\
MICE & 1.00(0.13)  & 1.01(0.05) & 4.79(2.17) & 1.01(0.05) & 1.02(0.03) & 0.94(0.02)\\
MissForest &0.87(0.15)  & 0.94(0.08) & 1.46(0.29) & 0.96(0.06) & 0.97(0.07) & 0.83(0.01)\\
\hline
\end{tabular}%
\end{center}
\vspace{-20pt}
\end{table}

\subsection{Experiments with self-censoring in UCI data}
Despite its mild and flexible nature as a condition for nonparametric identification in MNAR data, it is important to acknowledge that the no self-censoring model assumption may not always hold in real MNAR data scenarios. To assess the robustness and adaptability of our model in the presence of model misspecification, we conducted a comparative analysis by contrasting the results of different methods under the opposite of no self-censoring within the MNAR regime --- the self-censoring model. 

In the self-censoring model, we followed the methodology outlined in \citet{ipsen2020not}. Here, we retained the values for half of the features, while for the other half, their values were set to be missing if they exceeded the average value of that respective feature. Among all MNAR mechanisms, this particular setting stands out as one of the most challenging scenarios for a method based on the no self-censoring model, as it specifically encompasses the self-censoring aspect that our model inherently lacks.

In the results presented in Table \ref{robust}, it is observed that not-MIWAE with a linear self-censoring model for missingness consistently outperforms other models. This aligns with expectations since it is the only model that matches the self-censoring assumption. However, when the self-censoring missingness model is replaced by a general linear one, as seen in both not-MIWAE and GINA, there is a noticeable decline in performance, despite the linear self-censoring being a special case of the linear model.

Remarkably, our proposed model, which does not incorporate any modeling of self-censoring, achieves relatively acceptable results even in the presence of model misspecification, particularly when using a nonlinear no self-censoring missingness model.
This observation implies that the correlation between data variables could make the no self-censoring model a practical assumption, even in scenarios where self-censoring might be a factor to consider. 
\begin{table}[tbh!]
\begin{center}
\caption{Mean(SD) for imputation RMSE on  UCI Data with self-censoring. The best two results are highlighted. By default, the missingness mechanisms in deep generative models are general linear structures, while (nl) indicates that a nonlinear structure is used instead, and (s) means a self-censoring structure which particularly applies to not-MIWAE. The results show a certain degree of robustness of the proposed method in the case of model misspecification.}
\label{robust}
\begin{tabular}{lrrrrrr}
\hline
Method & \multicolumn{1}{c}{banknote}  & \multicolumn{1}{c}{concrete}& \multicolumn{1}{c}{yeast} &  \multicolumn{1}{c}{red wine} & 
\multicolumn{1}{c}{white wine} & \multicolumn{1}{c}{waveform}\\ \hline
Proposed &  1.38(0.08)  & 1.77(0.10) & 1.65(0.05) & 1.52(0.03) & 1.59(0.02) & 1.45(0.02)\\
Proposed(nl) & \B 1.21(0.04) &   \B 1.67(0.03) & 1.65(0.05) &\B 1.44(0.02) & 1.57(0.02) & 1.43(0.01)\\
GINA & 2.66(1.46)  & 2.37(1.32) & 1.52(0.05) & 1.88(0.56) & 1.74(0.26) & 1.55(0.10)\\
Not-MIWAE  &1.73(0.61)  & 1.99(0.67) & \B 1.51(0.11) & 1.79(0.65) & 1.63(0.01) & 1.52(0.04) \\
Not-MIWAE(s)&  \B 1.04(0.16) & \B 1.21(0.10) & \B 1.27(0.12) &  \B 1.15(0.01) & \B 1.20(0.03) & \B 0.94(0.02)\\
MICE &  1.41(0.00)  &1.87(0.00) & 1.76(0.00) & 1.68(0.00) & \B 1.41(0.00) & \B 1.42(0.00)\\
MissForest & 1.28(0.01)  & 1.81(0.02) & 1.72(0.01) & 1.62(0.02) & 1.63(0.01) & 1.51(0.00)\\
\hline
\end{tabular}%
\end{center}
\vspace{-20pt}
\end{table}

\section{Implementation Details in Experiments}
We consider the implementation for both linear missingness models and nonlinear missingness models, that is, whether $p_{\psi}(r|x)$ is a linear or nonlinear function of $x$.
For learning a linear missingness model, there are no hidden layers from input $X$ and $\tilde{Z}$ to the output $R$. We can remove the connection between $X_j$ and $R_j$ for all $1 \leq j \leq p$ from the fully connected neural network structure to arrive at a structure of no self-censoring. To model a nonlinear relationship, at least one hidden layer is added with a nonlinear transformation such as RELU or $tanh$. There is no direct way of removing edges from such a network to obtain a no self-censoring effect. One brute-force way to implement this nonlinear self-censoring model is to use a whole set of unique edges for each missingness indicator $R_j$ for $1 \leq j \leq p$ and iterate through the dimensions. \textcolor{black}{Instead,} we propose to use a shortcut by remasking each corresponding data variable $X_j$ with zero for each missingness indicator $R_j$, while still sharing all the weight parameters for each dimension $j$ for $1 \leq j \leq p$. This is inspired by the practice of initial zero imputation prior to training used for DLVMs for missing data, which deactivates the propagation of the loss on the associated weight parameters \citep{nazabal2020handling}. The number of parameters of the model can be kept down in this regime, and even smaller than a general missingness mechanism model without considering no self-censoring.

For all experiments, TensorFlow probability \citep{dillon2017tensorflow} and the Adam optimizer \citep{kingma2014adam} were employed, with the learning rate set at 0.001. No regularization techniques were applied. For MICE and missForest, we utilize the implementation in the Python package \code{sklearn} directly.
In the following, we use $D$ for the dimension of the latent space for the data, and $H$ for the number of neurons in the hidden layer. A 1-dimensional latent space was employed for the missingness indicators. In the inference network, we mainly utilized a structure of $p$-$H$-$H$-$(D+1)$, which maps the data (with missing data filled with zero) into distributional parameters for the two latent spaces. The decoder $p_{\theta}(x|z)$ followed a structure of $D$-$H$-$H$-$p$. In the linear implementation of the missingness model, there was no hidden layer, while in the nonlinear one, we included one hidden layer. All neural networks incorporated $tanh$ activations, except for the output layer where no activation function was used. The generative models adopted the objectives with $K=20$ importance samples. The observational noise $\gamma$ for the data variables was set as a learnable parameter in the model. All methods were trained with a batch size of 16 for 10k epochs. Imputation RMSE was estimated using 10k importance samples. For data generation, we generated from the latent variables one sample for each data dimension in each iteration.
\paragraph{Simulation}  We used latent dimension $H=128$ with $D=3$ for low-dimensional data, and $D = 10$ for high-dimensional data in the simulation. In the mixture of Gaussians, we simplified the network structures to have one hidden layer for both the encoder and decoder for the data, $p$-$H$-$(D+1)$ and $D$-$H$-$p$ respectively, and the maximum number of epochs were set to be 20k.
\paragraph{UCI data} We used latent dimension $D = p-1$ for UCI data, where the data were standardized to have a mean of 0 and unit variance before introducing missing values.  And the data are randomly shuffled at the beginning of each iteration of the experiment.
\paragraph{HIV data} As the data is binary, one can not differentiate between the zero values and the initial zero imputation, we input both the data and missingness indicators to the encoder and employed an encoder structure similar to the one in the synthetic experiment for a mixture of Gaussians. Additionally, a Bernoulli decoder instead of a Gaussian decoder was used for the data.
\paragraph{Yahoo! data} Prior to training, all user ratings are normalized to fall within the range of 0 to 1 (this normalization will be reversed during evaluation). Gaussian likelihood with a variance of $\gamma = 0.02$ is employed for all the generative models. A 20-dimensional latent space for the data is utilized, and the decoder follows a structure of 20-10-$p$. The inference network adopts the point net structure proposed by \cite{ma2018eddi}, employing a 20-dimensional feature mapping $h$ parameterized by a single-layer neural network and 20-dimensional ID vectors for each variable. The symmetric operator is defined as the summation operator. The missing model 
is parameterized by a linear neural network. All methods underwent training for 100 epochs with a batch size of 100. Imputation RMSE was estimated using 10 importance samples.









\vskip 0.2in
\bibliography{sample}

\end{document}